\newtheorem{theorem}{Theorem}[section]
\newtheorem{lemma}[theorem]{Lemma}
\newtheorem{proposition}[theorem]{Proposition}
\newtheorem{corollary}[theorem]{Corollary}
\theoremstyle{remark}
\theoremstyle{definition}
\newtheorem{definition}[theorem]{Definition}
\newtheorem{remark}[theorem]{Remark}
\newcommand{\rd}{{\rm d}}
\newcommand{\e}{{\rm e}}
\newcommand{\N}{{\mathbb N}}
\newcommand{\R}{{\mathbb R}}
\newcommand{\C}{{\mathbb C}}
\newcommand{\Z}{{\mathbb Z}}
\newcommand\beq{\begin{equation}}
\newcommand\eeq{\end{equation}}
\newcommand{\beqnt}{\begin{equation*}}
\newcommand{\eeqnt}{\end{equation*}}
\newcommand{\dist}{\mathrm{dist}}
\newcommand\I{\mathrm{i}}
\newcommand{\set}[2]{\{#1 : #2 \}}
\newcommand{\Set}[2]{\Big\{  #1 : #2 \Big\}  }
\begin{document}
\date{}
\title{Embedded eigenvalues of generalized Schr\"odinger operators}
\author{Jean-Claude Cuenin}
\address{Mathematisches Institut, Ludwig-Maximilians-Universit\"at M\"unchen, 80333 Munich, Germany}
\email{cuenin@math.lmu.de}
\begin{abstract}
We provide examples of operators $T(D)+V$ with decaying potentials that have embedded eigenvalues. The decay of the potential depends on the curvature of the Fermi surfaces of constant kinetic energy $T$. We make the connection to counterexamples in  Fourier restriction theory.
\end{abstract}
\maketitle

\section{Introduction}
The question of whether the Schr\"odinger operator 
\begin{align}\label{Schrodinger operator}
-\Delta+V
\end{align}
with a decaying potential $V$ can have eigenvalues embedded in the continuous spectrum is notoriously difficult. On physical grounds, one has good reasons to expect that the potential cannot prevent a particle from escaping to spatial infinity, due to quantum mechanical tunneling. This argument would lead one to believe that embedded eigenvalues cannot occur. It came as a big surprise in the early days of quantum mechanics when Wigner and von Neumann \cite{wigner1929merkwurdige} (see also \cite[Section~XII.13]{MR0493421} for a corrected proof) found an example of a bounded potential that decays like $1/|x|$ at infinity such that the operator \eqref{Schrodinger operator} has an embedded eigenvalue $\lambda=1$. The crucial feature of the potential is its delicate oscillatory pattern, which causes waves to be reflected coherently. On the other hand, Kato~\cite{MR0108633} proved that if $V$ is bounded and $V(x)=o(1/|x|)$ as $|x|\to\infty$, then there are no embedded eigenvalues (see  \cite{MR0276624} and \cite{MR0247300} for the case of long-range perturbations, which we will not discuss here.)

A different example leading to an embedded eigenvalue for \eqref{Schrodinger operator} was constructed by Ionescu and Jerison \cite{MR2024415}. The Ionescu-Jerison potential is bounded, decays like $1/|x|$ in one coordinate direction and like $1/|x|^2$ in the others. While the Wigner-von Neumann construction is reduced to a one-dimensional problem by spherical symmetry, the Ionescu-Jerison construction is inherently multi-dimensional. Note that the Wigner-von Neumann potential is in $L^q(\R^d)$ for any $q>d$, while the Ionescu-Jerison potential is in $L^q(\R^d)$ for any $q>(d+1)/2$. Koch and Tataru \cite{KochTataru2006} proved the absence of embedded eigenvalues for \eqref{Schrodinger operator} for a large class of potentials including $L^{(d+1)/2}(\R^d)$. This is sharp in view of the Ionescu-Jerison example.

Recently, Frank and Simon \cite{2015arXiv150401144F} adapted and simplified the two examples to show that embedded eigenvalues may occur for potentials with arbitrary small $L^q$ norms, More precisely, they proved the following. 
\begin{itemize}
\item[(A)] There exists a sequence of potentials $V_n\in L^q(\R^d)$ for any $q>(d+1)/2$ such that $-\Delta+V_n$ has eigenvalue $\lambda=1$ and $\lim_{n\to\infty}\|V_n\|_{L^q(\R^d)}= 0$.
\item[(B)] There exists a sequence of \emph{radial} potentials $V_n\in L^q(\R^d)$  for any $q>d$ such that $-\Delta+V_n$ has eigenvalue $\lambda=1$ and $\lim_{n\to\infty}\|V_n\|_{L^q(\R^d)}= 0$.
\end{itemize}  
The potentials can be chosen real-valued, so that \eqref{Schrodinger operator} is self-adjoint. Conversely, they prove that even for complex-valued potentials $V\in L^q(\R^d)$, with $d/2\leq q\leq (d+1)/2$ in the general case and $d/2\leq q\leq d$ in the radial case (in both cases $q\neq d/2$ if $d=2$), every eigenvalue $\lambda\in\C$ of \eqref{Schrodinger operator} satisfies 
\begin{align}\label{Lieb-Thirring}
|\lambda|^{1-d/(2q)}\leq C_{d,q}\|V\|_{L^q(\R^d)}.
\end{align}
Since the exponent of $|\lambda|$ is nonnegative, \eqref{Lieb-Thirring} implies the absence of eigenvalues in $(0,\infty)$ for potentials with small $L^q$ norm.
The result in the general case was proved before by Frank \cite{MR2820160} for non-embedded eigenvalues, i.e.\ $\lambda\in\C\setminus[0,\infty)$. In view of (A)--(B), the assumption on the upper bound for $q$ are optimal. Counterexamples for $q<d/2$ are of a very different nature and we will not consider them here. We just mention \cite{MR1809288} and \cite{MR1880829} where examples of unbounded, compactly supported potentials $V\in L^q(\R^d)$ (with $q<d/2$) were constructed such that \eqref{Schrodinger operator} has eigenvalue $\lambda=1$. 

There is an interesting connection between the above two examples (Wigner-von Neumann and Ionescu-Jerison) and counterexamples in the Fourier restriction problem. In its dual form, the latter asks for which exponents $p,q\in [1,\infty]$ the inequality
\begin{align}\label{restriction problem}
\|\widehat{g\rd\sigma}\|_{L^{p'}(\R^d)}\leq C\|g\|_{L^{q'}(S^{d-1})}
\end{align} 
holds for any $g\in L^{q'}(S^{d-1})$. Here, $\rd\sigma$ is the canonically induced surface measure on $S^{d-1}$. Necessary conditions for \eqref{restriction problem} to hold are 
\begin{align}\label{necessary conditions restriction}
1\leq p< \frac{2d}{d+1},\quad \frac{1}{q}\geq \frac{d+1}{d-1}\frac{1}{p'},
\end{align}
see e.g. \cite{MR1232192}. Whether these conditions are also sufficient is one of major open problems in harmonic analysis, known as the restriction conjecture. For $q=2$ it is known to be true, i.e.\ \eqref{restriction problem} holds for
\begin{align}\label{Stein Tomas exponents}
1\leq p\leq \frac{2(d+1)}{d+3}.
\end{align}
This is the content of the Stein-Tomas theorem (see \cite[Proposition IX.2.1]{MR1232192}). If one restricts the class of $g$ in \eqref{restriction problem} to radial functions, then the first condition
in \eqref{necessary conditions restriction} is in fact sufficient (see \cite[Chapter 8, Proposition 5.1]{MR2827930}). The necessity follows from the fact that
\begin{equation}\label{asymptotic surface measure}
\begin{split}
\widehat{\rd\sigma}(x)&=2\pi|x|^{-\frac{d-2}{2}}J_{\frac{d-2}{2}}(|x|)\\
&=2\sqrt{2\pi}|x|^{-\frac{d-1}{2}}\cos\left(|x|-\frac{\pi(d-1)}{4}\right)+\mathcal{O}(|x|^{-\frac{d+1}{2}})
\end{split}
\end{equation}
as $|x|\to\infty$ and by taking $g=1$ in \eqref{restriction problem}. In the nonradial case, the necessity of the second condition in \eqref{necessary conditions restriction} follows from the so-called Knapp example (see e.g.\ \cite{MR3243741} or \cite{MR3052498} for a textbook presentation). The connection to the problem of embedded eigenvalues is twofold. First, the numerology is such that 
\begin{align*}
\mathcal{B}(L^p(\R^d),L^{p'}(\R^d))&=L^d(\R^d)\quad\mbox{for}\quad p=\frac{2d}{d+1},\\
\mathcal{B}(L^p(\R^d),L^{p'}(\R^d))&=L^{\frac{d+1}{2}}(\R^d)\quad\mbox{for}\quad p=\frac{2(d+1)}{d+3}.
\end{align*}
Here $\mathcal{B}(L^p(\R^d),L^{p'}(\R^d))$ denotes the space of bounded operators from $L^p$ to $L^{p'}$. 
On the left, we have precisely the endpoints in \eqref{necessary conditions restriction}, \eqref{Stein Tomas exponents}. On the right, we have the endpoints for $q$ in (A), (B). Second, the eigenfunction used in \cite{2015arXiv150401144F} for the Wigner-von Neumann example is $u(x)=\widehat{\rd\sigma}(x)w(x)$, where $w$ is a positive radial function. This is the first (and quite obvious) connection. The eigenfunction in the simplest version (the resulting potential is complex-valued in this case) of the Ionescu-Jerison example \cite{MR2024415} has the form
\begin{align}\label{Ionescu-Jerison eigenfunction}
u(x)=\e^{\I x_d}(1+|x'|^4+|x_d|^2)^{-N/2},
\end{align}
where $x=(x',x_d)\in \R^{d-1}\times\R$ and $N>(d+1)/4$ (this ensures that $u\in L^2(\R^d)$). The parabolic scaling is characteristic for Knapp type examples. In fact, it is not difficult to show that $g=\widehat{u}$ is a
superposition of infinitely many Knapp examples (see Section \ref{Section Knapp}). 

The significance of the unit sphere in the above discussion is that it coincides with the Fermi surface for the dispersion relation $T(\xi)=|\xi|^2$ at energy $\lambda=1$.
For arbitrary smooth kinetic energies (dispersion relations) $T$, the Fermi surface at energy $\lambda$ (assumed to be a regular value of $T$) is given by
\begin{align}\label{Fermi surface}
M_{\lambda}=\set{\xi\in\R^d}{T(\xi)=\lambda}.
\end{align}
If $M_{\lambda}$ is compact and has everywhere non-vanishing Gussian curvature, the conclusion of the Stein-Tomas theorem still holds, i.e.\ \eqref{restriction problem} with $q=2$ and $p$ as in \eqref{Stein Tomas exponents} is true. Moreover, analogues of the bound \eqref{Lieb-Thirring} were proved by the author in \cite{MR3608659} for some special choices of the kinetic energy. 

The aim of this note is to construct examples of embedded eigenvalues for the generalized Schr\"odinger operators, i.e.\ operators of the form
\begin{align}\label{generalized Schrodinger operator}
T(D)+V.
\end{align} 
Here $V$ is a decaying potential and the translation-invariant kinetic energy operator $T(D)$ is a Fourier multiplier acting on Schwartz functions $f$ as
\begin{align}\label{def. T(D)}
(T(D)f)(x):=(2\pi)^{-d}\int_{\R^d}\e^{\I x\cdot\xi}T(\xi)\widehat{f}(\xi)\rd \xi.
\end{align}
The convention for the Fourier transformation used here is
\begin{align*}
\widehat{f}(\xi):=\int_{\R^d}\e^{-\I x\cdot\xi}f(x)\rd x,\quad f\in L^1(\R^d).
\end{align*}
All integrals will be over $\R^d$ unless otherwise indicated and $D_j=-\I\partial_j$ for all $j=1,\ldots,d$.

The following are the main results of this note.
\begin{theorem}[Complex-valued potentials]\label{thm. Complex-valued potentials}
Let $T:\R^d\to\C$ be smooth and polynomially bounded, and let $\lambda\in\C$ be a regular value. Then there exists a sequence of smooth potentials $V_n:\R^d\to\C$, $n\in\N$, satisfying (possibly after rotating coordinates)
\begin{align}\label{decay V nonradial}
|V_n(x)|\lesssim(n+|x_1|^2+\ldots +|x_{d-1}|^2+|x_d|)^{-1}
\end{align}
and such that $\lambda$ is an eigenvalue of $T(D)+V_n(x)$ in $L^2(\R^d)$for every $n\in\N$. In particular, for any $q>(d+1)/2$, we have that $\lim_{n\to\infty}\|V_n\|_{L^q}= 0$. 
\end{theorem}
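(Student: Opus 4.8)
The plan is to reverse-engineer the potential from a chosen eigenfunction, adapting the Ionescu--Jerison construction \eqref{Ionescu-Jerison eigenfunction} to the Fermi surface $M_\lambda$ of $T$. Since $\lambda$ is a regular value, $M_\lambda$ is a smooth hypersurface; after a rotation of coordinates I may assume that near some point $\xi_0\in M_\lambda$ the surface is a graph $\xi_d=\psi(\xi')$ over the tangent plane, with $\nabla\psi(\xi_0')=0$. The model eigenfunction should be $u_n(x)=\e^{\I x\cdot\xi_0}\,\phi_n(x)$ where $\phi_n$ is a slowly-varying envelope adapted to parabolic (Knapp) scaling in the $\xi'$ directions and to ordinary scaling in $\xi_d$, i.e.\ something like $\phi_n(x)=(n+|x_1|^2+\dots+|x_{d-1}|^2+|x_d|)^{-N/2}$ for a suitable large $N$ (large enough that $u_n\in L^2(\R^d)$; the exponents $2$ and $1$ in \eqref{decay V nonradial} reflect exactly the parabolic/linear anisotropy). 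One then simply \emph{defines}
\begin{align*}
V_n(x):=-\frac{(T(D)-\lambda)u_n(x)}{u_n(x)},
\end{align*}
so that $(T(D)+V_n)u_n=\lambda u_n$ holds tautologically, and the whole problem reduces to showing (i) $u_n$ does not vanish, (ii) $u_n\in L^2$, (iii) $V_n$ is smooth, and (iv) $V_n$ obeys the decay bound \eqref{decay V nonradial}.

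Points (i)--(iii) are arranged by construction: $\phi_n$ is a strictly positive Schwartz-class-like function (its reciprocal is smooth and polynomially bounded), and choosing $N>(d+1)/4$ as in the classical case — or the analogous threshold dictated by the anisotropic weight — secures $u_n\in L^2$. The substance is (iv). To estimate $(T(D)-\lambda)u_n$ I would pass to the Fourier side: $\widehat{u_n}(\xi)=\widehat{\phi_n}(\xi-\xi_0)$ is concentrated in a parabolic cap of dimensions $\sim n^{-1/2}\times\dots\times n^{-1/2}\times n^{-1}$ around $\xi_0$, and
\begin{align*}
(T(D)-\lambda)u_n(x)=(2\pi)^{-d}\int \e^{\I x\cdot\xi}\,(T(\xi)-\lambda)\,\widehat{\phi_n}(\xi-\xi_0)\,\rd\xi.
\end{align*}
On the support of $\widehat{\phi_n}(\cdot-\xi_0)$ one Taylor-expands $T(\xi)-\lambda$ about $\xi_0$; because $\xi_0\in M_\lambda$ the constant term vanishes, and since $M_\lambda$ is tangent to $\{\xi_d=\psi(\xi')\}$ with $\nabla\psi(\xi_0')=0$, the symbol $T(\xi)-\lambda$ vanishes to first order in $\xi_d-\xi_{0,d}$ and, in the right coordinates, is $O(|\xi'-\xi_0'|^2+|\xi_d-\xi_{0,d}|)$ on the cap — matching the anisotropic scaling of $\widehat{\phi_n}$. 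This is precisely the mechanism that, in the Euclidean case $T(\xi)=|\xi|^2$, produces the Ionescu--Jerison potential; here one only needs the curvature hypothesis (regular value) plus smoothness of $T$, together with the polynomial boundedness of $T$ to control the contribution away from the cap via non-stationary phase / rapid decay of $\widehat{\phi_n}$. Carrying out this expansion and dividing by $u_n$ yields the claimed $O((n+|x_1|^2+\dots+|x_{d-1}|^2+|x_d|)^{-1})$ bound on $V_n$.

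The main obstacle is step (iv): getting the \emph{uniform in $n$} decay $|V_n(x)|\lesssim(n+|x_1|^2+\dots+|x_{d-1}|^2+|x_d|)^{-1}$ with a constant independent of $n$, which is what then forces $\|V_n\|_{L^q}\to 0$ for $q>(d+1)/2$ by a scaling computation. This requires tracking how each derivative of $T$ at $\xi_0$ contributes when paired against the anisotropically rescaled envelope, and ruling out that the first-order term in the Taylor expansion of $T$ (the one proportional to $\xi_d-\xi_{0,d}$, i.e.\ to $D_d u_n$ after inverse Fourier transform) produces something worse than the ambient weight to the power $-1$; one uses that differentiating $\phi_n$ once in $x_d$ gains a factor comparable to $(n+|x_1|^2+\dots+|x_{d-1}|^2+|x_d|)^{-1}$. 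The global (rather than merely local-near-$\xi_0$) nature of $T$ is handled by a cutoff: write $T=T\chi+T(1-\chi)$ with $\chi$ supported near $\xi_0$; the far piece contributes only a rapidly decaying (in $n$, and in $x$) error because $\widehat{\phi_n}$ is essentially supported in the cap, and this error is absorbed into the bound after possibly enlarging the constant. The $L^q$ statement is then immediate: $\|V_n\|_{L^q}^q\lesssim\int(n+|x'|^2+|x_d|)^{-q}\rd x\simeq n^{-q}\cdot n^{(d-1)/2}\cdot n=n^{(d+1)/2-q}\to0$ precisely when $q>(d+1)/2$.
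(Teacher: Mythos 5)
Your proposal is correct and follows essentially the same route as the paper: an Ionescu--Jerison-type quasimode $\e^{\I \xi_0\cdot x}$ times an anisotropic (parabolically scaled) envelope, with $V_n$ defined by division, and the decay extracted from the fact that the Taylor expansion of $T-\lambda$ at $\xi_0$ has no constant term and no tangential first-order terms after rotating so that $\nabla T(\xi_0)\parallel e_d$. The paper packages your step (iv) as an anisotropic symbol-class estimate (Propositions \ref{proposition pseudodifferential mapping properties} and \ref{proposition symbols}, proved by integration by parts), which rigorously handles the frequency tails you propose to control by a cutoff; the only cosmetic caveat is that the envelope should be taken as $(n^2+|x'|^4+|x_d|^2)^{-N/2}$ so that it is genuinely smooth.
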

\begin{theorem}[Complex-valued potentials II]\label{thm. Complex-valued potentials II}
Let $T:\R^d\to\R$ be smooth and polynomially bounded, and let $\lambda\in\R$ be a regular value (hence $M_{\lambda}$ is a smooth hypersurface). Assume that $M_{\lambda}$ has $k<d-1$ non-vanishing principal curvatures at some point. Then there exists a sequence of smooth potentials $V_n:\R^d\to\C$, $n\in\N$, satisfying (possibly after rotating coordinates)
\begin{align}\label{decay V nonradial k nonzero curvatures}
|V_n(x)|\lesssim(n+|x_1|^2+\ldots +|x_{k}|^2+|x_{k+1}|^3+\ldots+|x_{d-1}|^3+|x_d|)^{-1}
\end{align}
and such that $\lambda$ is an eigenvalue of $T(D)+V_n(x)$ in $L^2(\R^d)$, for all $n\in\N$. In particular, for any $q>(k+2)/2+(d-1-k)/3$, we have that $\lim_{n\to\infty}\|V_n\|_{L^q}= 0$. 
\end{theorem}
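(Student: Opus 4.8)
The plan is to construct $u_n$ and $V_n$ together: $u_n$ will be an explicit eigenfunction whose Fourier transform is concentrated near a point $\xi_0\in M_\lambda$ at which $M_\lambda$ has $k$ nonvanishing principal curvatures, and $V_n:=-(T(D)-\lambda)u_n/u_n$. The decay \eqref{decay V nonradial k nonzero curvatures} will then follow from a quantitative comparison of how fast $(T(D)-\lambda)u_n$ vanishes against $u_n$ itself. First I would normalise. Since $\lambda$ is a regular value, after a rotation we may assume $\nabla T(\xi_0)\parallel e_d$, and after a further rotation of $\xi_1,\dots,\xi_{d-1}$ that the second fundamental form of $M_\lambda$ at $\xi_0$ is diagonal with its nonzero entries in the directions $e_1,\dots,e_k$. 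Conjugating by the modulation $u\mapsto\e^{-\I\xi_0\cdot x}u$ — which turns $T(D)$ into the Fourier multiplier with symbol $T(\cdot+\xi_0)$ and changes neither $\|u\|_{L^2}$ nor $|V|$ — we may assume $\xi_0=0$, $T(0)=\lambda$, $\partial_dT(0)\ne0$. Near $0$ one then has $T(\xi)-\lambda=e(\xi)\,(\xi_d-\phi(\xi'))$ with $e$ smooth, $e(0)\ne0$, and $\phi(\xi')=\tfrac12\sum_{j\le k}\kappa_j\xi_j^2+O(|\xi'|^3)$, the $\kappa_j\ne0$.

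Next I would set up the anisotropic scaling. Assign the weights $w_j=\tfrac12$ for $1\le j\le k$, $w_j=\tfrac13$ for $k<j<d$, $w_d=1$ — these are exactly the Knapp scalings for a cap of $M_\lambda$ that is flat to first order in $\xi_d$ — and put $Q(x)=\sum_{j\le k}x_j^2+\sum_{k<j<d}|x_j|^3+|x_d|$. Substituting $\xi_j=n^{-w_j}\eta_j$ and multiplying by $n$, the function $n(T(\xi_0+n^{-w}\eta)-\lambda)$ converges, uniformly on compacts, to the symbol of the constant-coefficient operator $\mathcal T_0:=e(0)\bigl(D_d-\Phi_0(D')\bigr)$, where $\Phi_0$ is the part of $\phi$ of weighted degree $1$: the quadratic form $\tfrac12\sum_{j\le k}\kappa_j\xi_j^2$ together with the cubic terms of $\phi$ in the flat variables $\xi_{k+1},\dots,\xi_{d-1}$. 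Because the Hessian of $\phi$ was diagonalised, $\phi$ contains no terms of weighted degree $<1$, while all of its remaining terms, and the variation of $e$, produce monomials of weighted degree $>1$ and hence come with negative powers of $n$.

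Now I would solve the model problem and transplant. Take $v_0(\eta)=g(\eta)^{-N/2}$ with $g(\eta)=1+\sum_{j\le k}\eta_j^4+\sum_{k<j<d}\eta_j^6+\eta_d^2$ and $N$ large; then $v_0>0$ everywhere and, by an AM--GM estimate on the cross terms, $g^{1/2}\sim1+Q$, so $|v_0|\sim(1+Q)^{-N}$. The key elementary fact is that $|\eta_j|\lesssim g^{w_j/2}$, whence $|\partial_jg|\lesssim g^{1-w_j/2}$, $|\partial_j\log v_0|\lesssim g^{-w_j/2}$, and, iterating, $|D^\alpha v_0|\lesssim g^{-|\alpha|_w/2}\,|v_0|$ for every multi-index $\alpha$, with $|\alpha|_w=\sum_jw_j\alpha_j$; in particular $|D^\alpha v_0|\lesssim(1+Q)^{-1}|v_0|$ whenever $|\alpha|_w\ge1$. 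Applying this to the monomials $D_d$, $D_j^2$ $(j\le k)$ and the flat cubic monomials that make up $\mathcal T_0$ — all of weighted degree exactly $1$ — gives $W_0:=\mathcal T_0v_0/v_0$ with $|W_0(\eta)|\lesssim(1+Q(\eta))^{-1}$, and the same bound (with an extra factor $n^{-\delta}$) for the error monomials of higher weighted degree. Setting $u_n(x)=\e^{\I\xi_0\cdot x}v_0(n^{-w_1}x_1,\dots,n^{-w_d}x_d)$, one has $u_n\in L^2$, $u_n$ smooth and nowhere zero, $|u_n(x)|\sim n^{N}(n+Q(x))^{-N}$, and $(T(D)+V_n)u_n=\lambda u_n$ holds by construction. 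Computing $(T(D)-\lambda)u_n$ on the Fourier side, its main part is $n^{-1}\e^{\I\xi_0\cdot x}(\mathcal T_0v_0)(n^{-w}x)$, so the main part of $V_n$ is $-n^{-1}W_0(n^{-w}x)$, and $\bigl|n^{-1}W_0(n^{-w}x)\bigr|\lesssim n^{-1}\bigl(1+Q(n^{-w}x)\bigr)^{-1}=\bigl(n+Q(x)\bigr)^{-1}$, which is exactly \eqref{decay V nonradial k nonzero curvatures}; the remaining contributions are smaller by powers of $n$. Finally $\|(n+Q)^{-1}\|_{L^q}^q=n^{\sigma-q}\,\|(1+Q)^{-1}\|_{L^q}^q$ with $\sigma=\sum_jw_j=\tfrac{k}{2}+\tfrac{d-1-k}{3}+1$, which is finite and tends to $0$ exactly when $q>(k+2)/2+(d-1-k)/3$.

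The step I expect to be the main obstacle is the control of the part of $T$ that is not seen by the local model $\mathcal T_0$. Since $u_n$ must be nowhere vanishing, $\widehat{u_n}$ cannot have compact support, so $(T(\cdot)-\lambda)\widehat{u_n}$ has tails on which $T$, being only polynomially bounded, is large; one must show that these tails contribute to $(T(D)-\lambda)u_n$ an error still bounded by $|u_n(x)|\,(n+Q(x))^{-1}$. The idea is to integrate by parts repeatedly in $\xi$, playing off the rapid decay of $\widehat{v_0}$ (which holds because, by the bound above, every derivative of $v_0$ lies in $L^1$ once $N$ is large) against the polynomial growth of $T$ and its derivatives; it is here that the precise exponents in $g$ get pinned down, and if necessary one first replaces $T$ by a polynomial agreeing with it to high order near $\xi_0$. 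The other point that needs care, though elementary, is the verification of the weighted-derivative estimate $|D^\alpha v_0|\lesssim(1+Q)^{-1}|v_0|$ for all monomials occurring in $\mathcal T_0$ and in its lower-order corrections.
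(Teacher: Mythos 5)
Your proposal is correct and follows essentially the same route as the paper: the same anisotropic Ionescu--Jerison ansatz $u=\e^{\I\xi_0\cdot x}(1+|x'|^4+|x''|^6+|x_d|^2)^{-N/2}$ with Knapp weights $(1/2,\dots,1/2,1/3,\dots,1/3,1)$, the same normalization (rotate so $\nabla T(\xi_0)\parallel e_d$ and diagonalize the second fundamental form), the same observation that the Taylor support of $T-\lambda$ at $\xi_0$ lies in the weighted-degree-$\ge 1$ set, and the same rescaling and $L^q$ computation. The "main obstacle" you single out -- controlling the non-local tails of the polynomially bounded symbol -- is exactly what the paper's Propositions \ref{proposition pseudodifferential mapping properties} and \ref{proposition symbols} handle, by finite Taylor expansion with integral remainder combined with integration by parts in the regularized double integral, which is the mechanism you sketch.
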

\begin{theorem}[Real-valued, radial potentials]\label{thm. Radial potentials}
Let $T:\R^d\to\R$ be a radially symmetric polynomial, and let $\lambda\in\R$ be a regular value. Assume in addition that $M_{\lambda}$ is compact. Then there exists a sequence of smooth radial potentials $V_n:\R^d\to\C$, $n\in\N$, satisfying 
\begin{align}\label{V decay Radial potentials}
|V_n(x)|\lesssim(n+|x|)^{-1}.
\end{align}
and such that $\lambda$ is an eigenvalue of $T(D)+V_n(x)$ in $L^2(\R^d)$, for all $n\in\N$. In particular, $\lim_{n\to\infty}\|V_n\|_{L^q}= 0$ for any $q>d$.
\end{theorem}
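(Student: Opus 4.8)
\emph{Proof strategy.}
The plan is to adapt the Wigner--von Neumann construction, in the streamlined radial form used by Frank and Simon \cite{2015arXiv150401144F}, to the operator $T(D)=P(-\Delta)$, where $P$ is the one--variable polynomial with $T(\xi)=P(|\xi|^2)$ and $m:=\deg P$. Since $T(D)+V$ commutes with rotations whenever $V$ is radial, it is enough to produce $V_n$ together with a radial $L^2$--eigenfunction. Writing a radial function on $\R^d$ as $u(x)=|x|^{-(d-1)/2}g(|x|)$ conjugates the radial part of $-\Delta$ to the one--dimensional operator $L_0=-\partial_r^2+\frac{(d-1)(d-3)}{4r^2}$ on $L^2((0,\infty),\rd r)$, hence $T(D)$ to the order--$2m$ ordinary differential operator $P(L_0)$, whose coefficients are Laurent polynomials in $r$ that are bounded together with all their derivatives on $[1,\infty)$. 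Since $\lambda$ is a regular value and $M_\lambda\neq\emptyset$ (otherwise $\lambda$ is not an embedded value and there is nothing of interest), there is $\rho>0$ with $P(\rho^2)=\lambda$, $P'(\rho^2)\neq0$, so that $\{|\xi|=\rho\}\subseteq M_\lambda$. Let $u_0(x):=|x|^{-(d-2)/2}J_{(d-2)/2}(\rho|x|)$; it is an entire function of $|x|^2$, hence a smooth radial function, it solves $(-\Delta-\rho^2)u_0=0$ and therefore $(P(-\Delta)-\lambda)u_0=0$, and its radial profile $g_0(r)=r^{1/2}J_{(d-2)/2}(\rho r)$ has only simple zeros on $(0,\infty)$ and satisfies $g_0(r)=A\cos(\rho r-\delta)+O(r^{-1})$, $A\neq0$, as $r\to\infty$.

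The eigenfunction will be $u_n:=u_0/h_n(|x|)$, where
\[
h_n(r):=1+c_n\int_0^r g_0(s)^{2N}\chi(s)\,\rd s ,
\]
$c_n\downarrow 0$ (ultimately $c_n=1/n$), $\chi\in C^\infty$ with $\chi\equiv0$ on $[0,1]$, $\chi\equiv1$ on $[2,\infty)$, $0\le\chi\le1$, and $N$ a fixed integer depending only on $m$ (one may take $N=m$). Then $h_n\equiv1$ near the origin, so $u_n\equiv u_0$ is smooth there; $h_n\ge1$ is smooth, so $u_n$ is smooth on all of $\R^d$; and, $g_0^{2N}$ having positive mean over a period, $h_n(r)\gtrsim1+c_nr$ uniformly in $n$. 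Putting $V_n:=(\lambda u_n-T(D)u_n)/u_n$ makes $(T(D)+V_n)u_n=\lambda u_n$ true by construction, and $u_0,h_n$ being real, $V_n$ is real--valued, as in the title. The point of the specific $h_n$ is that $V_n$ is actually a \emph{smooth} function in spite of the division: the Leibniz expansion of $P(L_0)(g_0h_n^{-1})$ has the diagonal term $h_n^{-1}P(L_0)g_0=\lambda g_0h_n^{-1}$, whence
\[
\lambda\,g_0h_n^{-1}-P(L_0)(g_0h_n^{-1})=-\sum_{0\le i<j\le2m} c_{ij}(r)\,(\partial_r^{\,i}g_0)\,\partial_r^{\,j-i}(h_n^{-1})
\]
with bounded coefficients $c_{ij}$; every summand carries a factor $h_n^{(l)}$ with $1\le l\le 2m$, and since $h_n^{(l)}=c_n(g_0^{2N}\chi)^{(l-1)}$ still contains, for $N\ge m$, an undifferentiated power of $g_0$, it vanishes on the zero set of $g_0$ to at least the local multiplicity there. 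Hence the numerator vanishes wherever $u_n$ does, to at least the same order, so $V_n$ is smooth; dividing by $g_0h_n^{-1}$ and using $h_n\gtrsim1+c_nr$, every resulting term is $\lesssim c_n/(1+c_nr)$, so that $|V_n(x)|\lesssim c_n/(1+c_n|x|)$.

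Taking $c_n=1/n$ yields exactly $|V_n(x)|\lesssim(n+|x|)^{-1}$, i.e.\ \eqref{V decay Radial potentials}, and $\int_{\R^d}(n+|x|)^{-q}\,\rd x=C_{d,q}\,n^{d-q}\to0$ as $n\to\infty$ for every $q>d$, giving $\|V_n\|_{L^q}\to0$. It remains to see that $u_n$ is a genuine eigenfunction. It is smooth, radial and not identically zero; from $g_0(r)/h_n(r)\asymp(c_nr)^{-1}$ for large $r$ one gets $|\partial^\alpha u_n(x)|\lesssim c_n^{-1}|x|^{-(d+1)/2}$ for all $\alpha$ and $|x|\ge1$, hence $u_n\in H^{2m}(\R^d)=\dom(T(D))$ and $u_n\in L^2(\R^d)$; since $V_n$ is bounded, $(T(D)+V_n)u_n=\lambda u_n$ holds in $L^2$, so $\lambda$ is an eigenvalue of $T(D)+V_n$. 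The main work, and the main obstacle, is the second step: the Leibniz bookkeeping for $P(L_0)(g_0h_n^{-1})$ must be carried out carefully enough to confirm both the vanishing orders on the non-explicit zero set of $g_0$ --- so that $V_n$ is genuinely smooth and not merely defined off a family of spheres --- and the clean, $n$--uniform bound $|V_n|\lesssim c_n/(1+c_n|x|)$; the freedom to enlarge the power $2N$, at no cost since $g_0^{2N}$ still has positive period mean, is what makes this go through. (This is also the step that uses polynomiality of $T$: it is what makes $T(D)$ a differential operator, so that the local Wigner--von Neumann mechanism applies.)
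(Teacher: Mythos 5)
Your argument is correct in substance, but it takes a genuinely different route from the paper's. Both proofs start from the same generalized eigenfunction: your $u_0(x)=|x|^{-(d-2)/2}J_{(d-2)/2}(\rho|x|)$ is, up to a constant, exactly the inverse Fourier transform $\sigma_\lambda^\vee$ of the surface measure on $M_\lambda$ used in the paper, and both exploit that polynomiality of $T$ makes $T(D)=\sum_j c_j(-\Delta)^j$ a local operator. They diverge afterwards. You introduce decay divisively, $u_n=u_0/h_n$ with $h_n'=c_n g_0^{2N}\chi$ of positive period mean (the classical Wigner--von Neumann mechanism, as in Frank--Simon's example (B)), and the zeros of the denominator are handled automatically because $h_n^{(l)}=c_n(g_0^{2N}\chi)^{(l-1)}$ vanishes to order at least $2N-l+1\geq 1$ on the zero set of $g_0$ once $N\geq m$; everything then reduces to Leibniz/Fa\`a di Bruno bookkeeping for the conjugated ODE operator $P(L_0)$ on $[1,\infty)$, with $V_n\equiv 0$ on the ball where $h_n$ is constant. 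The paper instead multiplies by a decaying weight, $\widetilde u=\varphi\cdot(1+|x|^2)^{-N/2}$ with $\varphi=\sigma_\lambda^\vee$ outside a ball, controls $(T(D)-\lambda)\widetilde u$ by stationary phase together with the pseudodifferential Lemma~\ref{lemma psdos radial weights} (so as to reuse the machinery of Sections 2 and 4), and then removes the zeros by an \emph{additive} correction $w=\sum_k(|x|-r_k)^{2m}\chi(|x|-r_k)f(r_k)$ localized at the zeros, in the spirit of the Ionescu--Jerison regularization of Proposition~\ref{proposition real potential}. Your route is more elementary (pure ODE, no oscillatory-integral estimates) and delivers real-valued radial $V_n$ just as directly; its cost is that the division step must be carried out with constants uniform over the infinitely many zero spheres, which works because the Bessel asymptotics make the zeros uniformly simple and uniformly separated --- the same fact the paper invokes. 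Two minor points to fix in a write-up: the relation $g_0/h_n\asymp(c_nr)^{-1}$ should read $\lesssim$, since $g_0$ vanishes on spheres (you only ever use the upper bound); and smoothness of $V_n$ across each zero sphere should be justified by the one-dimensional Hadamard division lemma, namely that $f/g_0$ is smooth with uniform bounds when $f$ is smooth and vanishes on the simple zero set of $g_0$.
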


\begin{theorem}[Chandrasekhar-Herbst operator]\label{thm Chandrasekhar-Herbst}
Let $H=\sqrt{-\Delta+1}-1$ be the Chandrasekhar-Herbst operator with mass $1$, and let $\lambda>0$. Then there exists a sequence of smooth potentials $V_n:\R^d\to\R$, $n\in\N$, satisfying 
\begin{align}\label{V decay Chandrasekhar-Herbst}
|V_n(x)|\lesssim(n+|x'|^2+|x_d|)^{-1}
\end{align}
holds and such that $\lambda$ is an eigenvalue of $H+V_n$, for every $n\in\N$.
\end{theorem}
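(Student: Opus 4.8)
The plan is to write $u_n$ down as a parabolically rescaled real wave packet adapted to the Fermi sphere and to \emph{define} $V_n$ through the eigenvalue equation.

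\emph{Geometry.} For $T(\xi)=\langle\xi\rangle-1$ with $\langle\xi\rangle:=\sqrt{|\xi|^2+1}$, the equation $T(\xi)=\lambda>0$ becomes $|\xi|=r_\lambda:=\sqrt{\lambda^2+2\lambda}$, so $M_\lambda$ is the round sphere of radius $r_\lambda$; since $\nabla T(\xi)=\xi/\langle\xi\rangle\neq 0$ for $\xi\neq0$ and $0\notin M_\lambda$, every $\lambda>0$ is automatically a regular value and $M_\lambda$ has nonvanishing Gaussian curvature. Rotating coordinates we take $r_\lambda e_d$ as a reference point. The engine of the construction is that $\e^{\pm\I r_\lambda x_d}$ are generalized eigenfunctions of $H$, since $H\e^{\pm\I r_\lambda x_d}=(\sqrt{r_\lambda^2+1}-1)\,\e^{\pm\I r_\lambda x_d}=\lambda\,\e^{\pm\I r_\lambda x_d}$. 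Modulating such a plane wave by a slowly varying amplitude $a_n$ and expanding the symbol of $H$ around $r_\lambda e_d$ gives $H(\e^{\I r_\lambda x_d}a_n)=\e^{\I r_\lambda x_d}(\lambda+f(D))a_n$, where $f(\xi):=\langle\xi+r_\lambda e_d\rangle-1-\lambda$ is smooth with $f(0)=0$ and $\nabla f(0)=\tfrac{r_\lambda}{\lambda+1}e_d$.

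\emph{Construction.} For \emph{complex} potentials (the analogue of Theorem~\ref{thm. Complex-valued potentials}) it suffices to take $u_n=\e^{\I r_\lambda x_d}a_n$ with $a_n(x)=(n+|x'|^4+x_d^2)^{-N/2}$ and $N>(d+1)/4$ (so $u_n\in L^2$): then $u_n$ never vanishes and $V_n:=(\lambda-H)u_n/u_n=-f(D)a_n/a_n$ is smooth, and differentiating $a_n$ and using $f(0)=0$, $\nabla f(0)\parallel e_d$ gives $|V_n|\lesssim(n+|x'|^2+|x_d|)^{-1}$, whence $\|V_n\|_{L^q}\to0$ for $q>(d+1)/2$. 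For a \emph{real} potential we are forced to take $u_n$ real, say $u_n=2\cos(r_\lambda x_d)\,a_n$; because $H$ has a real, even symbol it maps real functions to real functions, so $V_n:=(\lambda-H)u_n/u_n$ is automatically real wherever $u_n\neq0$. Here $(\lambda-H)u_n=-2\,\re\big(\e^{\I r_\lambda x_d}f(D)a_n\big)$, so with $\mu_n:=f(D)a_n/a_n$ one gets $V_n=-\re\mu_n+\tan(r_\lambda x_d)\,\im\mu_n$, the $\tan$ being the danger on the nodal hyperplanes $\{\cos(r_\lambda x_d)=0\}$. The leading part of $\im\mu_n$ equals $-\tfrac{r_\lambda}{\lambda+1}\,\partial_d\log a_n$, so one applies the Wigner--von Neumann device: let $a_n$ depend on $x_d$ only through an antiderivative $G$ of $\cos^2(r_\lambda x_d)$, e.g.\ $a_n(x)=(n+|x'|^4+G(x_d)^2)^{-N/2}$ with $G(x_d)=\tfrac{x_d}{2}+\tfrac{\sin(2r_\lambda x_d)}{4r_\lambda}$. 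Then $\partial_d a_n$ vanishes to second order on the nodal set, killing the singularity of $\tan(r_\lambda x_d)\,\partial_d\log a_n$; since $G(x_d)\sim x_d/2$, the same differentiation as before yields $|V_n|\lesssim(n+|x'|^2+|x_d|)^{-1}$.

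\emph{Obstacle.} The difficulty is the nonlocality of $H=\sqrt{-\Delta+1}-1$: $f(D)a_n$ is not a finite differential expression but a full symbolic expansion, and its terms of order $\ge3$ in $\xi$ (together with the pseudodifferential remainder) contribute to $\im\mu_n$ pieces that do \emph{not} vanish on the nodal set under the choice of $G$ above. Thus $u_n=2\cos(r_\lambda x_d)a_n$ solves $(H+V_n)u_n=\lambda u_n$ only up to an error that must itself be divided by $u_n$, and the naive $V_n$ acquires poles on the nodal hyperplanes. To remove them I would run an Ionescu--Jerison-type iteration, correcting $u_n$ by successive real terms of strictly smaller spatial size until the residual is divisible by $u_n$ in $C^\infty$, checking that reality is preserved at each step and that the scheme converges with the parabolic decay intact. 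The main technical work is precisely this control of the nonlocal remainder --- showing it is small enough, and of the right structure near the nodal set, to be absorbed by a real, $C^\infty$-divisible correction; the nonvanishing curvature of the sphere is what makes the iteration close and produce the bound $|V_n(x)|\lesssim(n+|x'|^2+|x_d|)^{-1}$. Granting this, $(H+V_n)u_n=\lambda u_n$ with $u_n\in L^2$, $V_n$ smooth and real, and $\|V_n\|_{L^q}\to 0$ for every $q>(d+1)/2$.
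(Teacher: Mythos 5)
Your geometric setup and the complex-valued warm-up are fine, and you have correctly located the crux: after the Wigner--von Neumann choice of amplitude only the first-order ($\partial_d$) part of the expansion of the conjugated symbol is cancelled on the nodal set, while the nonlocal remainder is not, so the naive $V_n$ has poles on the nodal hyperplanes. But your proposed repair --- an unspecified ``Ionescu--Jerison-type iteration'' --- is precisely the step you do not carry out, and it is not clear it closes: the remainder is a full pseudodifferential expansion rather than finitely many Taylor terms, and you would have to show that every correction is real, preserves the parabolic decay, and that the scheme converges. As written, the proof is conditional on its hardest step, so there is a genuine gap.

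The paper removes the obstacle in one algebraic move rather than by iteration. With $g(t)=\int_0^t\sin^2(y)\,\rd y$ and $w_n(x)=(n^2+|x'|^4+g(x_d)^2)^{-N/2}$, the amplitude is taken to be not $w_n$ itself but
\begin{align*}
\phi_n:=\Bigl(\sqrt{(D+e_d)^2+1}+\sqrt{(D-e_d)^2+1}\Bigr)w_n,\qquad u_n:=\sin(x_d)\,\phi_n .
\end{align*}
Expanding $(T(D)-\lambda)\sin(x_d)\phi_n$ via $T(D)\e^{\pm\I x_d}=\e^{\pm\I x_d}T(D\pm e_d)$, the only term that ends up divided by $\sin(x_d)$ carries the operator $T(D+e_d)-T(D-e_d)$, and the identity
\begin{align*}
\sqrt{(D+e_d)^2+1}-\sqrt{(D-e_d)^2+1}
=4D_d\Bigl(\sqrt{(D+e_d)^2+1}+\sqrt{(D-e_d)^2+1}\Bigr)^{-1}
\end{align*}
shows that this operator applied to $\phi_n$ equals exactly $4D_dw_n$: a purely local first derivative with no remainder. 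Since $\partial_d w_n$ is proportional to $g(x_d)\sin^2(x_d)$ times a symbol, the quotient by $\sin(x_d)$ is smooth, while the genuinely nonlocal piece $(T(D-e_d)-\lambda)\phi_n$ appears multiplied by $\sin(x_d)$ and never meets the nodal set; Proposition \ref{proposition symbols} (in its weakened form for $w_n$) then yields the decay $(n+|x'|^2+|x_d|)^{-1}$ once one checks $|\phi_n|\gtrsim w_n$ for large $n$. This precomposition of the Wigner--von Neumann amplitude with the \emph{sum} of the two shifted square-root operators, so that their \emph{difference} becomes local, is the idea your proposal is missing.
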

The last theorem generalizes a recent result of \cite{MR3661408} to nonradial potentials. Finally, as an example of a matrix-valued differential operator, we consider the Dirac operator (see e.g.\ \cite{MR1219537} for the definition of the Dirac matrices $\alpha_j$). 
\begin{theorem}[Dirac operator]\label{thm. Dirac}
Let $\mathcal{D}=\alpha\cdot D+\beta$ be the Dirac operator with mass~$1$ and let $\lambda\in (-\infty-1)\cup (1,\infty)$. There exists a sequence of smooth (anti-hermitian) potentials $V_n$, $n\in\N$, satisfying 
\begin{align}\label{V decay Dirac}
|V_n(x)|\lesssim(n+|x'|^2+|x_d|)^{-1}
\end{align}
and such that $\lambda$ is an eigenvalue of $\mathcal{D}+V_n$ for every $n\in\N$.
\end{theorem}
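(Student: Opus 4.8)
The plan is to transplant the Knapp/Ionescu--Jerison construction behind Theorem~\ref{thm. Complex-valued potentials} to the spinor setting, exploiting the algebra of the Dirac matrices. Because the $\alpha_j$ and $\beta$ are hermitian and satisfy $\alpha_j\alpha_k+\alpha_k\alpha_j=2\delta_{jk}$, $\alpha_j\beta+\beta\alpha_j=0$ and $\beta^{2}=I$, the symbol $h(\xi)=\alpha\cdot\xi+\beta$ obeys $h(\xi)^{2}=(|\xi|^{2}+1)\,I$, hence $\mathcal{D}^{2}=-\Delta+1$ and the Fermi surface of $\mathcal{D}$ at energy $\lambda$ is the sphere $\{|\xi|=r\}$ with $r:=\sqrt{\lambda^{2}-1}$, which has non-vanishing Gaussian curvature. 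We may assume $\lambda>1$; the case $\lambda<-1$ is completely analogous, with $\chi_0$ below chosen as a $\lambda$-eigenvector of $h(\xi_0)$ (now for the negative eigenvalue). After a rotation of coordinates we take the base point $\xi_0=re_d$, fix a unit spinor $\chi_0$ with $h(\xi_0)\chi_0=\lambda\chi_0$, and let $P:=\frac{1}{2\lambda}(h(\xi_0)+\lambda)$ be the orthogonal projection onto $\ker(h(\xi_0)-\lambda)$, so that $h(\xi_0)-\lambda=-2\lambda(1-P)$.

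I would look for eigenfunctions $u_n(x)=\e^{\I\Theta_n(x)}\e^{\I rx_d}\bigl(a_n(x)\chi_0+\phi_n(x)\bigr)$, where $a_n(x)=(n^{2}+|x'|^{4}+x_d^{2})^{-N/2}$, $N>(d+1)/4$, is the anisotropic amplitude from \eqref{Ionescu-Jerison eigenfunction} (chosen so that $u_n\in L^{2}$), $\phi_n$ is a correction taking values in $\Ran(1-P)$, and $\Theta_n$ is a small real phase. Since $(\mathcal{D}-\lambda)(\e^{\I rx_d}w)=\e^{\I rx_d}\bigl(-\I(\alpha\cdot\nabla)w+(h(\xi_0)-\lambda)w\bigr)$, the leading error from the plane wave is $-\I\e^{\I rx_d}(\alpha\cdot\nabla a_n)\chi_0$. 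The crucial algebraic point is that $h(\xi_0)\alpha_j=-\alpha_j h(\xi_0)$ for $j\neq d$, so $\alpha_j\chi_0\in\Ran(1-P)$, whereas $P\alpha_d\chi_0=(r/\lambda)\chi_0$. Hence the \emph{transverse} derivatives of $a_n$ can be absorbed into $\phi_n$ by solving the convergent fixed-point relation $\phi_n=-\frac{\I}{2\lambda}(1-P)\bigl((\alpha\cdot\nabla a_n)\chi_0+(\alpha\cdot\nabla)\phi_n\bigr)$, after which the residual $R_n:=\e^{-\I\Theta_n}\e^{-\I rx_d}(\mathcal{D}-\lambda)u_n$ satisfies $|R_n|\lesssim|\partial_{x_d}a_n|+|\nabla^{2}a_n|+|\nabla\Theta_n|\,a_n$. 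By the same elementary estimates as in the scalar case one then has $(|\partial_{x_d}a_n|+|\nabla^{2}a_n|)/a_n\lesssim(n+|x'|^{2}+|x_d|)^{-1}$ --- this is where the parabolic scaling $|x'|^{2}\leftrightarrow|x_d|$ enters, and it is essential that the first-order operator $\mathcal{D}$ sees the transverse directions only through $\nabla^{2}a_n$, not through $\nabla a_n$.

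To obtain an actual eigenvalue one must produce a smooth matrix potential $V_n$ with $V_n u_n=-(\mathcal{D}-\lambda)u_n$. Since $u_n$ never vanishes ($|u_n|\ge\frac12 a_n$ for $n$ large), such a $V_n$ exists, and it can be chosen \emph{anti-hermitian} exactly when $\re\langle u_n(x),(\mathcal{D}-\lambda)u_n(x)\rangle_{\C^{N}}\equiv0$; in that case one may take $V_n=|u_n|^{-2}(v_nu_n^{*}-u_nv_n^{*})-\langle u_n,v_n\rangle\,|u_n|^{-4}u_nu_n^{*}$ with $v_n=-(\mathcal{D}-\lambda)u_n$, which is smooth, anti-hermitian, satisfies $V_nu_n=v_n$ so that $(\mathcal{D}+V_n)u_n=\lambda u_n$, and obeys $|V_n|\lesssim|R_n|/a_n\lesssim(n+|x'|^{2}+|x_d|)^{-1}$. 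Enforcing the reality condition is the main obstacle. The $O(a_n\nabla a_n)$ part of $\re\langle u_n,(\mathcal{D}-\lambda)u_n\rangle$ vanishes automatically (because $\alpha_j$ is hermitian and $a_n$ is real), leaving a defect of size $|\nabla a_n|^{2}+a_n|\nabla^{2}a_n|$; one removes it by choosing $\Theta_n$ through a small ODE in the $x_d$ variable, using that the $x_d$-component of the Dirac current $u_n^{*}\alpha u_n$ equals $(r/\lambda)|\chi_0|^{2}a_n^{2}+O(a_n\nabla a_n)$ and is therefore bounded below. This has to be carried out simultaneously with the determination of $\phi_n$ (the phase feeds a fresh $\Ran(1-P)$-error back into the equation for $\phi_n$), and one must check that the coupled fixed-point scheme closes while keeping $|\nabla\Theta_n|$ and $\phi_n$ small enough not to spoil the bound on $R_n$ --- this is the technically delicate step. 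With $V_n$ in hand, $u_n\in L^{2}\setminus\{0\}$ since $|u_n|\sim a_n$ and $N>(d+1)/4$; and, as in the earlier theorems, $\|V_n\|_{L^{q}}\to0$ for every $q>(d+1)/2$ by the parabolic rescaling $x'\mapsto n^{1/2}y$, $x_d\mapsto nz$.
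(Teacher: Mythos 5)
Your route is genuinely different from the paper's, and it is incomplete at exactly the point where the paper's proof is clever. The paper normalizes $\lambda=\sqrt{2}$ (so $r=1$), picks $\mathbf v$ with $(\alpha_d+\beta)\mathbf v=\lambda\mathbf v$, and takes $u_n=\e^{\I x_d}\psi_n\mathbf v$ with the scalar amplitude $\psi_n=(n^2+|x'|^4+|x_d|^2)^{-N/2}$ and \emph{no} spinor correction and no extra phase. Then $(\mathcal D-\lambda)u_n=\e^{\I x_d}\sum_j D_j\psi_n\,\alpha_j\mathbf v$ identically, so one may simply set $V_n=-\psi_n^{-1}\sum_j D_j\psi_n\,\alpha_j$: this is a sum of \emph{purely imaginary scalars} times the hermitian matrices $\alpha_j$, hence anti-hermitian for free. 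The entire difficulty you identify as "the main obstacle" --- forcing $\re\langle u_n,(\mathcal D-\lambda)u_n\rangle\equiv 0$ so that a pointwise anti-hermitian $V_n$ with $V_nu_n=-(\mathcal D-\lambda)u_n$ exists --- never arises, because the paper's $V_n$ is not reconstructed from the pair $(u_n,v_n)$ but read off directly from the matrix structure of the error. Your proposal instead hinges on a coupled fixed-point/ODE scheme for $(\Theta_n,\phi_n)$: the phase $\Theta_n$ must kill the real part of $\langle u_n,v_n\rangle$, it feeds a new $\Ran(1-P)$ error back into the equation for $\phi_n$, and the transverse components $\partial_j\Theta_n\,J_j$ turn the "ODE in $x_d$" into a genuinely coupled system. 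You explicitly leave the closure of this scheme unverified, and it is the crux of your argument; as written the proof is not complete.

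That said, your instinct about the transverse first derivatives is not misplaced. In the paper's construction the terms $\psi_n^{-1}\partial_j\psi_n\sim |x'|^3(n^2+|x'|^4+x_d^2)^{-1}$ for $j<d$ survive into $V_n$, and in the regime $|x'|^2\gg n+|x_d|$ this is only of size $(n+|x'|^2+|x_d|)^{-1/2}$, weaker than \eqref{V decay Dirac}; your correction $\phi_n\in\Ran(1-P)$, exploiting $h(\xi_0)\alpha_j=-\alpha_j h(\xi_0)$ for $j\neq d$, is precisely what is needed to push these terms to second order and recover the full parabolic decay. So the two approaches trade off against each other: the paper gets anti-hermiticity (and a one-line proof) by keeping $V_n$ a combination of the $\alpha_j$, at the cost of the transverse first-order terms; you remove those terms but thereby destroy the algebraic mechanism that makes the potential anti-hermitian, and the repair you sketch is the hard, unproven part. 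If you want to salvage your write-up with minimal effort, drop $\Theta_n$ and $\phi_n$ entirely and observe, as the paper does, that the raw error is already of the form $M(x)u_n(x)$ with $M(x)$ anti-hermitian.
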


\begin{remark}
All of the above theorems also hold when $\lambda$ is a critical value of~$T$. In fact, this case is easier and somewhat less interesting since the distiction between the radial and the nonradial case disappears. An inspection of the proof of Theorem \ref{thm. Complex-valued potentials} will show that if $\lambda$ is a critical value, then one can even achieve the decay $$|V_n(x)|\lesssim (n+|x|)^{-2}.$$
\end{remark}

\begin{remark}
Using \cite[Lemma 3.3]{MR3608659} together with a Birman-Schwinger argument one can show that if $T:\R^d\to\R$ is smooth and proper (i.e.\ preimages of compact sets are compact), and if $M_{\lambda}$ has everywhere nonvanishing Gaussian curvature, then $\lambda$ cannot be an eigenvalue of $T(D)+V$, provided $\|V\|_{L^{(d+1)/2}}+\|V\|_{L^{\infty}}$ is sufficiently small. 
\end{remark}

\begin{remark}
Similarly, one can show that if $M_{\lambda}$ has at least $k$ nonvanishing principal curvatures at every point, then $\lambda$ cannot be an eigenvalue, provided $\|V\|_{L^{(k+2)/2}}+\|V\|_{L^{\infty}}$ is sufficiently small. Note that there is a gap between the $q$ in Theorem \ref{thm. Complex-valued potentials II} and $q=~(k+2)/2$ here. The reason is that the latter does not consider worst-case scenarios. For instance, if $T(\xi)=\xi_1^2+\ldots+\xi_{k+1}^2$ where $k<d-1$ (i.e.\ $T$ depends on less than $d$ variables), then, e.g.\ for $\lambda=1$, the Fermi surface $M_{\lambda}$ is a cylinder, and the proof of Thereom \ref{thm. Complex-valued potentials II} shows that one can get
\begin{align*}
|V_n(x)|\lesssim(n+|x_1|^2+\ldots +|x_{k}|^2+|x_{k+1}|^m+\ldots+|x_{d-1}|^m+|x_d|)^{-1}
\end{align*}
for arbitrary positive $m$. Then $V_n\in L^q(\R^n)$ for all $q>(k+2)/2+(d-1-k)/m$. This shows that the exponent $q=(k+2)/2$ is in general optimal.
\end{remark}

\begin{remark}
By a tranference-type argument one can prove analogues of Theorems \ref{thm. Complex-valued potentials}--\ref{thm. Radial potentials} to discrete Schr\"odinger operators on $\ell^2(\Z^d)$. In this case $T$ is a function on the torus $[0,2\pi]^d$ and acts as (by slight abuse of notation)
\begin{align}\label{def. T(D) discrete case}
(T(D)f)(n):=(2\pi)^{-d}\int_{[0,2\pi]^d}\e^{\I x\cdot\xi}T(\xi)\widehat{f}(\xi)\rd \xi,\quad n\in\Z^d,
\end{align}
on functions $f:\Z^d\to\C$ with rapid decay. Here,
\begin{align*}
\widehat{f}(\xi):=\sum_{n\in\Z^d}\e^{-\I n\cdot\xi}f(n)
\end{align*}
is the discrete Fourier transform of $f$. An argument involving the Poisson summation formula shows that if $f|_{\Z^d}$ is the restriction of a smooth function $f:\R^d\to\C$ with sufficient decay, then
\begin{align}\label{Poisson summation}
(T(D)f|_{\Z^d})(n)=(T_{\rm per}(D)f)(n),\quad n\in\Z^d.
\end{align}
Here $T_{\rm per}:\R^d\to\C$ is the periodization of $T:[0,2\pi]^d\to\C$. On the left hand side $T(D)$ is understood as a multiplier on $\Z^d$ as in \eqref{def. T(D) discrete case}, while on the right hand side $T_{\rm per}(D)$ is a multiplier on $\R^d$ in the sense of \eqref{def. T(D)}. Since Theorems \ref{thm. Complex-valued potentials}--\ref{thm. Radial potentials} can be applied to $T_{\rm per}$ and the eigenfunctions in the proofs may be chosen smooth and with arbitrary decay, the discrete analogues follow directly from \eqref{Poisson summation}.
\end{remark}

\begin{remark}
An interesting example where the discrete analogue of Theorem \ref{thm. Complex-valued potentials II} applies is the (standard) discrete Schr\"odinger operator
\begin{align*}
Hf(n):=-\frac{1}{4}\sum_{j=1}^d\left(f(n+e_j)-f(n-e_j)\right)+\frac{d}{2}f(n)+V(n)f(n)
\end{align*}
on $\ell^2(\Z^d)$. The symbol of the kinetic energy is 
\begin{align*}
T(\xi)=d-\sum_{j=1}^d\cos(\xi_j),
\end{align*}
and is well known (see e.g.\ \cite[Section 4.2]{MR3449177}) that in $d\geq 3$ dimensions its Fermi surface $M_{\lambda}$ has points of vanishing Gaussian curvature, i.e.\ $k<d-1$ in Theorem~\ref{thm. Complex-valued potentials II}. 
\end{remark}

We briefly sketch the outline of this note. In Section 2 we prove Theorems~\ref{thm. Complex-valued potentials} and Theorem \ref{thm. Complex-valued potentials II} as well as some basic, but useful auxiliary results on mapping properties of pseudodifferential operators between spaces of functions with (anisotropic) symbol-type behavior. In Section 3 we connect the construction of the eigenfunctions used in the previous section to the Knapp example in harmonic analysis. In Section 4 we show that the potentials in Theorems \ref{thm. Complex-valued potentials}--\ref{thm. Complex-valued potentials II} may be chosen real-valued under additional assumptions on the kinetic energy (see Proposition~\ref{proposition real potential} and Corollary~\ref{corollary real valued V}). In Section 5 we prove Theorems \ref{thm. Radial potentials}--\ref{thm. Dirac}.  
\section{Proofs of Theorems \ref{thm. Complex-valued potentials}--\ref{thm. Complex-valued potentials II}}

\subsection{Anisotropic symbols}
Let $\rho:\R^d\to[1,\infty)$ be a temperate weight, i.e.\ a continuous function with the property that there exists $s>0$ such that for every $x,y\in\R^d$
\begin{align}\label{def. temperate weights}
\rho(x)\lesssim \rho(y)\langle x-y\rangle^{s}.
\end{align} 
Note that this implies in particular that
\begin{align}
\langle x-y\rangle^{-s}\lesssim \rho(x)\lesssim \langle x-y\rangle^{s}.
\end{align}
We fix $\gamma=(\gamma_1,\ldots,\gamma_d)\in(0,1]^d$ such that
\begin{align}\label{special case wj powers of a fixed weight bis}
\gamma_d\geq \gamma_{d-1}\geq \ldots\geq \gamma_1.
\end{align}
\begin{definition}
Let $\ell\in\R$ and assume that $\rho,\gamma$ satisfy \eqref{def. temperate weights}, \eqref{special case wj powers of a fixed weight bis}. We say that $f\in S^{\ell}_{\gamma}(\R^d)$ if $f\in C^{\infty}(\R^d)$ and for any $\alpha\in\N_0^d$ there exist constants $C_{\alpha}>0$ such that for all $x\in\R^d$
\begin{align}\label{def. Sw}
|\partial^{\alpha}f(x)|\leq C_{\alpha}\rho(x)^{\ell}(x)\rho(x)^{-\gamma_1\alpha_1}\ldots \rho(x)^{-\gamma_d\alpha_d}.
\end{align}
\end{definition}
We recall that $a\in C_{\rm pol}^{\infty}(\R^d)$, the space of smooth,  polynomially bounded functions, if $a\in C^{\infty}(\R^d)$ and for any $\alpha\in\N_0^d$, there exist constants $C_{\alpha}>0$, $m_{\alpha}\in\R$ such that for all $x\in\R^d$
\begin{align}\label{def. smooth, polynomially bounded}
|\partial^{\alpha}a(\xi)|\leq C_{\alpha} \langle\xi\rangle^{m_{\alpha}}.
\end{align}
Both $S^{\ell}_{\gamma}(\R^d)$ and $C_{\rm pol}^{\infty}(\R^d)$ are locally convex spaces when equipped with the seminorms
\begin{align*}
\|f\|_{n}&:=\sup_{|\alpha|\leq n}\sup_{x\in\R^d}|\rho(x)^{-\ell}\rho(x)^{\gamma_1\alpha_1}\ldots \rho(x)^{\gamma_d\alpha_d}f(x)|,\quad f\in S^{\ell}_{\gamma}(\R^d),\\
\|a\|_{n,m}&:=\sup_{|\alpha|\leq n}\sup_{x\in\R^d}|\langle \xi \rangle^{-m}\partial^{\alpha}a(\xi)|,\quad a\in C_{\rm pol}^{\infty}(\R^d),
\end{align*}
respectively.
Let $h\in (0,1]$ and $\xi\in\R^d$. By abuse of notation we write
\begin{align}\label{def. h times xi}  
h\xi:=(h^{\gamma_1}\xi_1,\ldots,h^{\gamma_d}\xi_d)\in\R^d.
\end{align}
Here $h$ plays the role of a semiclassical parameter.
Given $a\in C_{\rm pol}^{\infty}(\R^d)$ and $f\in S^{\ell}_{\gamma}(\R^d)$,
we define
\begin{align}\label{def. semiclassical psdo}
(a(hD)f)(x):=(2\pi)^{-d}\lim_{\epsilon\to 0}\int\int\e^{\I (x-y)\cdot\xi}\varphi(\epsilon\xi)\varphi(\epsilon y) a(h\xi)f(y)\rd y\rd\xi
\end{align}
where $\varphi\in\mathcal{S}(\R^d)$ is fixed and satisfies $\varphi(0)=1$. 
Note that the integral in \eqref{def. semiclassical psdo} is absolutely convergent, and an integration by parts argument (as in the proof of Proposition \ref{proposition pseudodifferential mapping properties} below) shows that the definition is independent of the choice of $\varphi$. For $h=1$ the definition \eqref{def. semiclassical psdo} coincides with the one of \eqref{def. T(D)}. We write it in this way to avoid the semiclassical Fourier transformation. We will set $h=1/n$ later on. Alternatively, we could stick to $h=1$, but then we will have to consider functions 
depending on $n$ in the subsequent proofs. Both points of views are equivalent, and we freely swith between one and the other in later sections.

\begin{proposition}\label{proposition pseudodifferential mapping properties}
Let $a\in C_{\rm pol}^{\infty}(\R^d)$
Then 
\begin{align}\label{eq. mapping properties a(x,D)}
a(hD):S^{\ell}_{\gamma}(\R^d)\to S^{\ell}_{\gamma}(\R^d)
\end{align}
as a continuous map, with seminorm bounds independent of $h\in (0,1]$.
\end{proposition}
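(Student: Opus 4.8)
The plan is to prove continuity of $a(hD)$ on $S^{\ell}_{\gamma}(\R^d)$ by reducing to a Schwartz-kernel estimate obtained via integration by parts in the oscillatory integral, and then checking that the kernel, together with all its weighted derivatives, is absolutely integrable with bounds uniform in $h$. First I would regularize as in \eqref{def. semiclassical psdo} and integrate by parts in $\xi$: since $a(h\xi)$ satisfies $|\partial_\xi^\beta a(h\xi)| \lesssim h^{\gamma\cdot\beta}\langle h\xi\rangle^{m_\beta} \lesssim \langle\xi\rangle^{m_\beta}$ (using $\gamma_j\le 1$ and $h\le 1$), applying $(1+|x-y|^2)^{-N}(1-\Delta_\xi)^N$ produces, after passing $\epsilon\to0$, a kernel $K_h(x,y)$ with $|K_h(x,y)| \lesssim_N \langle x-y\rangle^{-2N}$ for every $N$, with constants depending only on finitely many seminorms $\|a\|_{n,m}$ and \emph{not} on $h$. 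More precisely one should be slightly more careful: integrate by parts enough times that the remaining $\xi$-integral converges absolutely, so that $K_h(x,y) = (2\pi)^{-d}\int \e^{\I(x-y)\cdot\xi}\,\widetilde{a}_{h,N,x,y}(\xi)\,\rd\xi$ with the symbol bounds as stated.

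Next I would verify the action on $S^{\ell}_\gamma$. Write $(a(hD)f)(x) = \int K_h(x,y) f(y)\,\rd y$ and differentiate under the integral sign (justified by the rapid decay of $K_h$ in $x-y$ and the polynomial growth control from \eqref{def. temperate weights}). Each $\partial_x^\alpha$ falling on $\e^{\I(x-y)\cdot\xi}$ brings down a factor $\xi^\alpha$, which upon integration by parts in $\xi$ is again absorbed into a symbol of controlled growth; alternatively, and more cleanly, one can move $\partial_x$ to $\partial_y$ modulo terms where $\partial_\xi$ hits $a(h\xi)$ (gaining the harmless $h^{\gamma\cdot\beta}\le 1$) and then integrate by parts in $y$, transferring the derivative onto $f$. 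This expresses $\partial^\alpha(a(hD)f)$ as a finite sum of terms $\int K_h^{(j)}(x,y)\,\partial^{\alpha'}f(y)\,\rd y$ with $|\alpha'|\le|\alpha|$ and each $K_h^{(j)}$ enjoying the same uniform $\langle x-y\rangle^{-2N}$ bound. Then I bound $|\partial^{\alpha'}f(y)| \le \|f\|_{|\alpha'|}\,\rho(y)^{\ell - \gamma_1\alpha'_1 - \cdots - \gamma_d\alpha'_d}$ from the definition \eqref{def. Sw}.

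The remaining point is purely about the weight: I need $\int \langle x-y\rangle^{-2N}\,\rho(y)^{\ell-\gamma\cdot\alpha'}\,\rd y \lesssim \rho(x)^{\ell-\gamma\cdot\alpha}$, i.e. the convolution of the rapidly decaying kernel with the weight reproduces the weight up to a constant, and it suffices to get $\rho(x)^{\ell-\gamma\cdot\alpha}$ since $\gamma\cdot\alpha' \le \gamma\cdot\alpha$ only when $\alpha'\le\alpha$ componentwise — but here the extra derivatives that landed on $f$ correspond exactly to the $\partial_x^\alpha$ we started with (after the integration-by-parts bookkeeping the total order is preserved, or decreased, in each variable), so in fact $\gamma\cdot\alpha' = \gamma\cdot\alpha$ in the leading term and $\le$ for the others, which is fine because $\rho\ge 1$. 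For the weight convolution estimate one uses the temperance inequality \eqref{def. temperate weights}: $\rho(y) \lesssim \rho(x)\langle x-y\rangle^s$ and also $\rho(y)^{-1}\lesssim \rho(x)^{-1}\langle x-y\rangle^s$, hence $\rho(y)^{\ell-\gamma\cdot\alpha'} \lesssim \rho(x)^{\ell-\gamma\cdot\alpha'}\langle x-y\rangle^{s|\ell-\gamma\cdot\alpha'|}$, and choosing $N$ large enough (depending on $\ell,\gamma,|\alpha|,s,d$) the $\langle x-y\rangle$-factors are integrable. Collecting constants, $\|a(hD)f\|_n \lesssim \|f\|_n$ with a bound depending on $n$, $d$, the weight data $s$, and finitely many seminorms of $a$, but not on $h$.

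The main obstacle is the bookkeeping of the integration by parts: one must organize the argument so that every derivative $\partial_\xi$ that lands on $a(h\xi)$ is accounted for (each contributes a factor $h^{\gamma_j}\le 1$, which is why uniformity in $h$ holds), while simultaneously ensuring that the $\xi$-integral defining $K_h$ and all its variants converges absolutely — this requires iterating the $(1-\Delta_\xi)$ trick a number of times governed by the worst polynomial growth exponent $\max_{|\beta|\le \text{(something)}} m_\beta$ of $a$, which is allowed to be arbitrarily large, so $N$ must be chosen after one knows which seminorms of $a$ enter. Everything else (differentiation under the integral, passage $\epsilon\to0$, the weight convolution) is routine given \eqref{def. temperate weights}.
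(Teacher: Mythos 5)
There is a genuine gap at the very first step. You claim that integrating by parts in $\xi$ alone (applying $(1+|x-y|^2)^{-N}(1-\Delta_{\xi})^{N}$) yields, after $\epsilon\to 0$, an honest kernel $K_h(x,y)$ with $|K_h(x,y)|\lesssim \langle x-y\rangle^{-2N}$, and that ``integrating by parts enough times'' makes the remaining $\xi$-integral converge absolutely. This cannot work: every $\partial_{\xi}$ that lands on $a(h\xi)$ produces another polynomially bounded function of $\xi$ (the class $C_{\rm pol}^{\infty}$ is closed under differentiation, with possibly \emph{larger} exponents $m_{\alpha}$), so no number of $\xi$-integrations by parts creates decay in $\xi$ --- it only creates decay in $x-y$. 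The claimed pointwise kernel bound is in fact false for the main examples of the paper: for $a(\xi)=\xi_1$ the Schwartz kernel of $a(D)$ is $D_1\delta(x-y)$, a distribution supported on the diagonal, not a function with off-diagonal decay. Consequently the representation $(a(hD)f)(x)=\int K_h(x,y)f(y)\,\rd y$ as an absolutely convergent integral, on which the rest of your argument rests, is not available.

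The missing ingredient is a second integration by parts, in $y$, which is exactly what the paper's proof supplies: alongside $L_1=\bigl(1+(x-y)\cdot D_{\xi}\bigr)/\bigl(1+|x-y|^2\bigr)$ one applies $L_2=\bigl(1-\xi\cdot D_y\bigr)/\bigl(1+|\xi|^2\bigr)$ a total of $l_2$ times, gaining the factor $\langle\xi\rangle^{-l_2}$ that makes the $\xi$-integral converge once $l_2>m(l_1)+d$; the price is $y$-derivatives falling on $f$ (and on $\varphi(\epsilon y)$), which is harmless because $f\in S^{\ell}_{\gamma}$ has all derivatives bounded by $\rho(y)^{\ell}$ (recall $\rho\geq 1$ and $\gamma_j>0$). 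You do invoke integration by parts in $y$ later, but only to redistribute the output derivatives $\partial_x^{\alpha}$, not where it is indispensable, namely to give meaning to the $\xi$-integral in the first place. Once this is corrected, the remainder of your argument --- the temperance estimate $\rho(y)\lesssim\rho(x)\langle x-y\rangle^{s}$, the choice of $l_1$ with $-l_1+\ell s<-d$, and the uniformity in $h$ coming from $h^{\gamma_j}\leq 1$ --- goes through and essentially coincides with the paper's proof.
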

\begin{proof}
We prove that $g=a(hD)f$ satisfies \eqref{def. Sw} for $\alpha=0$. The proof of the general case is similar. For $f\in S^{\ell}_{\gamma}(\R^d)$ and $\epsilon\in (0,1]$ let
\begin{align*}
g_{\epsilon}(x):=
(2\pi)^{-d}\int\int\e^{\I (x-y)\cdot\xi}\varphi(\epsilon y)\varphi(\epsilon\xi) a(h\xi)f(y)\rd y\rd \xi.
\end{align*}
Integration by parts yields
\begin{align*}
g_{\epsilon}(x)=(2\pi)^{-d}\int\int\e^{\I(x-y)\cdot\xi}L_1^{l_1}L_2^{l_2}\varphi(\epsilon\xi)\varphi(\epsilon y)a(h\xi) f(y)\rd y\rd \xi
\end{align*}
where $l_1,l_2\in\N$ and
\begin{align*}
L_1=\frac{1+(x-y)\cdot D_{\xi}}{1+|x-y|^2},\quad L_2=\frac{1-\xi\cdot  D_y}{1+|\xi|^2}.
\end{align*}
Hence, there exists $m(l_1)>0$ such that
\begin{align*}
|g_{\epsilon}(x)|\lesssim\|f\|_{l_2}\|a\|_{l_1,m(l_1)} \int\int\langle x-y\rangle^{-l_1}\langle \xi\rangle^{-l_2+m(l_1)} \rho(y)^{\ell}\rd y\rd\xi.
\end{align*}
By changing the roles of $x$ and $y$ in \eqref{def. temperate weights} we see that $\rho(y)\lesssim \rho(x)\langle x-y\rangle^s$. Choosing first $l_1$ so large that $-l_1+ls<-d$ and then $l_2$ so large that $-l_2+m(l_1)<-d$, we get
\begin{align*}
|g_{\epsilon}(x)|\lesssim \rho(x)^{\ell}
\end{align*}
where the implicit constant is independent of $\epsilon$ and $h$. The claim follows by letting $\epsilon\to 0$.
\end{proof}
\begin{definition}
For a smooth function $a:\R^d\to\C$ the \emph{Taylor support} of $a$ at the origin is the set
\begin{align*}
\mathcal{T}(a):=\set{\alpha\in\N_0^d}{\partial^{\alpha}a(0)\neq 0}.
\end{align*}
The Newton polyhedron of $a$, denoted by $\mathcal{N}(a)$, is the convex hull of the set $\set{\alpha+\R_+^d}{\alpha\in \mathcal{T}(a)}$. The set of vertices of $\mathcal{N}(a)$ is denoted by $\mathcal{V}(a)$ and is a subset of $\tau(a)$.
\end{definition}
\begin{proposition}\label{proposition symbols}
Let $a\in C_{\rm pol}^{\infty}(\R^d)$, $f\in S^{\ell}_{\gamma}(\R^d)$ and assume that \eqref{special case wj powers of a fixed weight bis} holds. Moreover, assume that 
\begin{align}\label{assumption Taylor support of a}
\emptyset\neq \tau(a)\subset\Set{\alpha\in\N_0^d}{\sum_{j=1}^d \gamma_j\alpha_j\geq 1}.
\end{align}
Then $a(hD)f\in h  S^{\ell-1}_{\gamma}(\R^d)$.
\end{proposition}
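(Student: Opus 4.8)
The plan is to reduce the statement to Proposition~\ref{proposition pseudodifferential mapping properties} by peeling off from $a$, via a Taylor expansion at the origin, a finite sum of monomials each carrying enough $\gamma$-weight to produce the prefactor~$h$. Fix an integer $N$ with $\gamma_1(N+1)\geq 1$, which is possible since $\gamma_1>0$. Taylor's formula with integral remainder gives
\begin{align*}
a(\xi)=\sum_{|\alpha|\leq N}\frac{\partial^\alpha a(0)}{\alpha!}\,\xi^\alpha+\sum_{|\beta|=N+1}\xi^\beta\,R_\beta(\xi),\qquad R_\beta(\xi)=\frac{N+1}{\beta!}\int_0^1(1-t)^N(\partial^\beta a)(t\xi)\,\rd t.
\end{align*}
In the polynomial part only the indices $\alpha\in\mathcal{T}(a)$ survive, and for these $\sum_j\gamma_j\alpha_j\geq 1$ by the hypothesis \eqref{assumption Taylor support of a}; for the remainder indices $\sum_j\gamma_j\beta_j\geq\gamma_1|\beta|=\gamma_1(N+1)\geq 1$ by \eqref{special case wj powers of a fixed weight bis} and the choice of~$N$. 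Differentiating under the integral sign and using $\langle t\xi\rangle\leq\langle\xi\rangle$ for $t\in[0,1]$ shows that $R_\beta\in C^\infty_{\rm pol}(\R^d)$, with seminorms controlled by those of~$a$. Hence one obtains a decomposition $a=\sum_{\alpha\in F}(\,\cdot\,)^\alpha\,c_\alpha$ over a \emph{finite} index set $F\subset\N_0^d$, where each $c_\alpha\in C^\infty_{\rm pol}(\R^d)$ (a constant $\partial^\alpha a(0)/\alpha!$ or a remainder $R_\beta$) and $n_\alpha:=\sum_{j=1}^d\gamma_j\alpha_j\geq 1$.

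Next I would transfer this to the operator level. By the convention \eqref{def. h times xi} one has $(h\xi)^\alpha=h^{n_\alpha}\xi^\alpha$, so the symbol of $a(hD)$ is $a(h\xi)=\sum_{\alpha\in F}h^{n_\alpha}\,\xi^\alpha\,c_\alpha(h\xi)$, and, all operators in sight being Fourier multipliers, this yields
\begin{align*}
a(hD)f=\sum_{\alpha\in F}h^{n_\alpha}\,D^\alpha\!\left(c_\alpha(hD)f\right),\qquad f\in S^\ell_\gamma(\R^d),
\end{align*}
an identity one justifies directly from \eqref{def. semiclassical psdo} by integrating by parts in $x$ to move $D^\alpha$ out of the regularized oscillatory integral, exactly as in the proof of Proposition~\ref{proposition pseudodifferential mapping properties}. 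Now Proposition~\ref{proposition pseudodifferential mapping properties}, applied with symbol $c_\alpha$, shows that $c_\alpha(hD)$ maps $S^\ell_\gamma(\R^d)$ into itself with seminorm bounds uniform in $h\in(0,1]$; and directly from the definition of the classes, $D^\alpha$ maps $S^\ell_\gamma(\R^d)$ continuously into $S^{\ell-n_\alpha}_\gamma(\R^d)\subset S^{\ell-1}_\gamma(\R^d)$, using $\rho\geq 1$ and $n_\alpha\geq 1$. Hence each summand $D^\alpha(c_\alpha(hD)f)$ lies in $S^{\ell-1}_\gamma(\R^d)$ with seminorms uniform in $h$, and since $h^{n_\alpha}=h\cdot h^{n_\alpha-1}$ with $h^{n_\alpha-1}\in(0,1]$, summing over the finitely many $\alpha\in F$ gives $a(hD)f\in h\,S^{\ell-1}_\gamma(\R^d)$, as claimed.

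I expect the only genuinely delicate point to be the bookkeeping in the operator identity above: checking that under the $\epsilon$-regularization in \eqref{def. semiclassical psdo} one may legitimately split the symbol $a(h\xi)$ into the monomial factor $h^{n_\alpha}\xi^\alpha$ and the polynomially bounded factor $c_\alpha(h\xi)$, peel off $D^\alpha$ (which is harmless on $S^\ell_\gamma(\R^d)$), and then invoke Proposition~\ref{proposition pseudodifferential mapping properties} for the remaining symbol. Everything else — the Taylor expansion, the fact that the remainders $R_\beta$ are polynomially bounded with all derivatives, and the elementary inclusion $S^{\ell-n_\alpha}_\gamma(\R^d)\subset S^{\ell-1}_\gamma(\R^d)$ — is routine.
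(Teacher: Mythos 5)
Your proof is correct and follows essentially the same route as the paper's: Taylor expansion at the origin to an order $k$ with $k\gamma_1\geq 1$, the hypothesis \eqref{assumption Taylor support of a} to extract a factor $h^{n_\alpha}$ with $n_\alpha\geq 1$ from the surviving monomials, the choice of $k$ (together with $\gamma_1=\min_j\gamma_j$) to do the same for the remainder terms, and Proposition~\ref{proposition pseudodifferential mapping properties} to control the residual symbols uniformly in $h$. The only cosmetic difference is that the paper regroups the Taylor polynomial via the Newton-polyhedron vertices $\mathcal{V}_k(a)$, whereas you treat each monomial of the Taylor polynomial directly; this changes nothing in substance (and is arguably slightly cleaner).
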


\begin{proof}
Choose $k\in\N$ such that $k\gamma_1\geq 1$. By Taylor's theorem, we have
\begin{align*}
a(\xi)=\sum_{|\alpha|\leq k}\frac{1}{\alpha!}\partial^{\alpha}a(0)\xi^{\alpha}+\sum_{|\alpha|=k}\xi^{\alpha}b_{\alpha}(\xi)
\end{align*}
where
\begin{align*}
b_{\alpha}(\xi)=k\sum_{|\gamma|=k}\frac{\xi^{\gamma}}{\gamma!}\int_0^1(1-\theta)^{k-1}\partial^{\gamma}a(\theta\xi)\rd\theta.
\end{align*}
Clearly $b_{\alpha}\in C_{\rm}^{\infty}(\R^d)$. By Proposition \ref{proposition pseudodifferential mapping properties}, we thus have that $b_{\alpha}(hD)=\mathcal{O}_{S^{\ell}_{\gamma}\to S^{\ell}_{\gamma}}(1)$. By \eqref{special case wj powers of a fixed weight bis} and the choice of $k$, it follows that 
\begin{align*}
\big|\sum_{|\alpha|=k}(hD)^{\alpha}b_{\alpha}(hD)f(x)\big|\lesssim h \rho(x)^{\ell-1}.
\end{align*}
Define 
\begin{align*}
\mathcal{V}_k(a):=\mathcal{V}(a)\cap \set{\alpha\in \N_0^d}{|\alpha|\leq k}\subset\tau(a).
\end{align*}
Then we can write 
\begin{align*}
\sum_{|\alpha|\leq k}\frac{1}{\alpha!}\partial^{\alpha}a(0)\xi^{\alpha}=\sum_{\alpha\in \mathcal{V}_k(a)}\xi^{\alpha}p_{\alpha}(\xi)
\end{align*}
where $p_{\alpha}$ are polynomials. In particular, $p_{\alpha}(hD)=\mathcal{O}_{S^{\ell}_{\gamma}\to S^{\ell}_{\gamma}}(1)$, so that \eqref{assumption Taylor support of a} implies
\begin{align*}
\big|\sum_{\alpha\in \mathcal{V}_k(a)}(hD)^{\alpha}p_{\alpha}(hD)\big|\lesssim h \rho(x)^{\ell-1}.
\end{align*}
The claim is proved.
\end{proof}

\begin{proof}[Proof of Theorem \ref{thm. Complex-valued potentials}]
Let $\eta\in M_{\lambda}$. Since $\lambda$ is a regular value, we have $\nabla T(\eta)\neq 0$. By rotating coordinates if necessary, we may assume that $\nabla T(\eta)=|\nabla T(\eta)|e_d$. We fix 
\begin{align}\label{gamma, rho thm 1}
\rho(x):=(1+|x'|^4+|x_d|^2)^{1/2},\quad \gamma=(1/2,\ldots,1/2,1)
\end{align}
and define 
\begin{align}\label{Ionescu-Jerison type example}
\psi(x):=\rho(x)^{-N},\quad u(x):=\e^{\I \eta\cdot x}\psi(x),
\end{align}
where $N>(d+1)/4$. A change of variables shows that
\begin{align*}
\int_{\R^d}|u(x)|^2\rd x=\left(\int_{\R^{d-1}}(1+|x'|^4)^{-N}\rd x'\right)\left(\int_{\R}(1+|x_d|^2)^{-N+\frac{d-1}{4}}\rd x_d\right)<\infty.
\end{align*}
Proposition \ref{proposition pseudodifferential mapping properties} then implies that $T(D)u\in L^2(\R^d)$ as well. Hence $u$ is in the domain of $T(D)$. Note that since
\begin{align}\label{commutation T(D) with plane wave}
T(D)\e^{\I \eta\cdot x}=\e^{\I \eta\cdot x}T(D+\eta)
\end{align} 
as operators on smooth functions with bounded derivatives, we may assume without loss of generality that $\eta=0$. For $n\geq 1$ let $h=1/n$ and $u_n(x):=h^Nu(hx)$, where $hx$ is defined as in \ref{def. h times xi}. If $V_n(x)=W_h(hx)$, then 
\begin{align*}
(T(D)-\lambda+V_n)u_n=0\iff (T(hD)-\lambda+W_h)u=0.
\end{align*}
Since $u$ has no zeros, we may write the second equation as $W_h=-(T(hD)-\lambda)u/u$.  
We will prove that
\begin{align}\label{estimate to prove in thm. Complex-valued potentials}
W_h(x)|\lesssim h(1+|x'|^2+|x_d|)^{-1}.
\end{align}
This is equivalent to \eqref{decay V nonradial}. Since \eqref{estimate to prove in thm. Complex-valued potentials} implies that $W_h\in L^q(\R^d)$ for $q>(d+1)/2$, it also follows that
\begin{align*}
\|V_n\|_{L^q(\R^d)}=h^{-\frac{d+1}{2q}}\|W_h\|_{L^q(\R^d)}=\mathcal{O}(h^{1-\frac{d+1}{2q}})\to 0
\end{align*}
as $h\to 0$. 
To prove \eqref{estimate to prove in thm. Complex-valued potentials}
we will apply Proposition \ref{proposition symbols} to $a(\xi)=T(\xi)-\lambda$ and $f=\psi$. It easy to check that $\psi\in S^{-N}_{\gamma}(\R^d)$ if $\gamma,\rho$ are defined as in \eqref{gamma, rho thm 1}. Moreover, 
condition \eqref{assumption Taylor support of a} is clearly satisfied. Proposition \ref{proposition symbols} now yields that $W_h\in  h  S^{-1}_{\gamma}(\R^d)$; in particular, \eqref{estimate to prove in thm. Complex-valued potentials} holds.
\end{proof}
\begin{proof}[Proof of theorem \ref{thm. Complex-valued potentials II}]
Let $\eta\in M_{\lambda}$ be such that only $k<d-1$ principal curvatures are nonzero at this point. We can again assume that $\eta$ is the origin and that $\nabla T(\eta)=|\nabla T(\eta)|e_d$. Write $x=(x',x'',x_d)\in\R^k\times\R^{d-1-k}\times\R$ and fix
\begin{align}\label{gamma, rho thm 2}
\rho(x):=(1+|x'|^4+|x''|^6+|x_d|^2)^{1/2},\quad \gamma=(\underbrace{1/2,\ldots,1/2}_{k \mbox{ times }},\underbrace{1/3,\ldots,1/3}_{d-1-k \mbox{ times }},1).
\end{align}
We then define $u$ as in \eqref{Ionescu-Jerison type example}, but with $\rho$ as in \eqref{gamma, rho thm 2}.
Near the origin, $M_{\lambda}$ is the graph of a smooth function $\Phi:\R^{k}\times\R^{d-1-k}\to\R$ over the last coordinate axis, and $\Phi(0)=0$, $\nabla\Phi(0)=0$. By the curvature assumption, we can write (after perhaps a linear change of the $(\xi',\xi'')$ coordinates)
\begin{align*}
\Phi(\xi',\xi'')=\frac{1}{2}\sum_{j=1}^{k}s_j\xi_j^2+\mathcal{O}(|(\xi',\xi'')|^3),\quad s_j=\pm 1,
\end{align*} 
as $(\xi',\xi'')\to 0$, with similar estimates for all derivatives.
By differentiating the identity $T(\xi',\xi'',\Phi(\xi',\xi''))=\lambda$ twice, we find that
\begin{align*}
\partial_{i}\partial_j T(0)=-(\partial_d T(0))^{-1}\partial_{i}\partial_j\Phi(0),\quad i,j\leq d-1,
\end{align*} 
and hence the Taylor support of $T(\xi)-\lambda$ at $\xi=0$ is contained in the set
\begin{align*}
\Set{\alpha\in\N_0^d}{\frac{1}{2}(\alpha_1+\ldots+\alpha_{k})+\frac{1}{3}(\alpha_{k+1}+\ldots+\alpha_{d-1})+\alpha_d\geq 1}.
\end{align*}
We define $W_h:=-(T(hD)-\lambda)u/u$ and $V_n(x):=W_h(hx)$ for $h=1/n$.
As in the proof of Theorem~\ref{thm. Complex-valued potentials} it is sufficient to prove that
\begin{align}\label{estimate to prove in thm. Complex-valued potentials II}
|W_h(x)|\lesssim h(1+|x'|^2+|x''|^3+|x_d|)^{-1}.
\end{align}
Indeed, \eqref{estimate to prove in thm. Complex-valued potentials II} is equivalent to \eqref{decay V nonradial k nonzero curvatures}, and for any $q>(k+2)/2+(d-1-k)/3$, we have that
\begin{align*}
\|V_n\|_{L^q(\R^d)}=h^{-\frac{1}{q}-\frac{k}{2q}-\frac{d-k-1}{3q}}\|W_h\|_{L^q(\R^d)}=\mathcal{O}(h^{1-\frac{1}{q}-\frac{k}{2q}-\frac{d-k-1}{3q}})\to 0
\end{align*}
as $h\to 0$. The bound \eqref{estimate to prove in thm. Complex-valued potentials II} follows from Proposition \ref{proposition symbols}.
\end{proof}

\section{Connection to Knapp's example}\label{Section Knapp}
Let us now quickly recall Knapp's homogeneity argument (see e.g.\ \cite[Example~ 10.4.4]{MR3243741}) for the sphere $S^{d-1}$. The same argument yields a necessary condition for the restriction problem of $M_{\lambda}$ if the latter has everywhere nonzero Gaussian curvature. For the present purpose it is more convient to consider the restriction problem in its original (not adjoint) form
\begin{align}\label{original restriction problem}
\|\widehat{f}\|_{L^q(S^{d-1})}\leq C\|f\|_{L^p(\R^d)},\quad \mbox{for all  }f\in\mathcal{S}(\R^d).
\end{align}
This is equivalent to \eqref{restriction problem}. Since the problem is translation-invariant, we may assume that the north pole of the unit sphere is the origin of our coordinate system. Let $f=\chi_R$ be a smoothed version of the characteristic function for a rectangle $R$. Then for $0<\delta<1$ and $f_{\delta}(x):=f(\delta x',\delta^2 x_d)$ one easiy checks that
\begin{align*}
\|f_{\delta}\|_{L^p(R^{d})}\sim \delta^{-\frac{d+1}{p}},\quad 
\|\widehat{f}_{\delta}\|_{L^q(S^{d-1})}\sim \delta^{\frac{d-1}{q}-d-1}.
\end{align*}
Since $\delta<1$, this shows that \eqref{original restriction problem} can only hold if the second condition in \eqref{necessary conditions restriction} is satisfied.

It is customary to call $f_{\delta}$ a Knapp example. 
If we look at its phase-space portrait we see that it is supported on a rectangle $R_{\delta}$ of size $\delta\times \delta^{-2}$ and that its Fourier transform decays rapidly off the dual rectangle $$R_{\delta}^*:=\set{\xi\in\R^d}{|\xi|'\lesssim\delta,\, |\xi_d|\lesssim \delta^{2}}.$$ Since  
\begin{align*}
T(\xi)=\lambda+(\partial_1T(0))\xi_d+\mathcal{O}(|\xi|^2)
\end{align*}
as $|\xi|\to 0$, we thus have
\begin{align*}
T(D)f_{\delta}(x)\sim T|_{R_{\delta}^*}f_{\delta}(x)\sim \lambda f_{\delta}(x)+\mathcal{O}(\delta^2).
\end{align*}
Since pseudodifferential operators do not move the support too much, this function is essentially supported (up to rapidly decaying tails) on the double of $R_{\delta}$. Cover $\R^d\setminus B(0,1)$ by finitely overlapping rectangles $R_j$ of size $2^j\times 4^j$, such that for $j\in \N$,
\begin{align*}
x\in R_j\implies |x'|^2+|x_d|\sim 4^j.
\end{align*} 
This can be done by a slight modification of the standard Littlewood-Paley decomposition. Let $\{\chi_j\}_{j=1}^{\infty}$ be a partition of unity subordinate to this cover and define
\begin{align}\label{superposition of Knapp examples}
u(x):=\sum_{j=1}^{\infty} 4^{-Nj} \chi_{j}(x),\quad x\in \R^d\setminus B(0,1).
\end{align}
The function $u$ may be viewed as a superposition of Knapp examples $\{\chi_{j}\}_{j=1}^{\infty}$ corresponding to rectangles $\{R_{j}\}_{j=1}^{\infty}$.
It is not difficult to show that, for $x$ in the exterior of the unit ball, we have that
\begin{align}\label{connection Ionescu-Jerison to Knapp}
u(x)\sim (|x'|^2+|x_d|)^{-N}
\end{align}
as well as
\begin{align*}
T(D)u(x)\sim \sum_{j=1}^{\infty}4^{-Nj} T|_{R_{j}^*}\chi_{R_j}(x)\sim \lambda u(x)+\mathcal{O}((|x'|^2+|x_d|)^{-N-1}).
\end{align*}
Modifying $u$ in \eqref{superposition of Knapp examples} by a compactly supported function, we easily arrange that $u>0$ and that $V=-(T(D)-\lambda)u/u$ is smooth and satisfies \eqref{decay V nonradial}. The relation~\eqref{connection Ionescu-Jerison to Knapp} connects the Knapp example and the Ionescu-Jerison example (i.e.\ the eigenfunction used in the proof of Theorem \ref{thm. Complex-valued potentials}). There is a similar connection between a Knapp example for surfaces with $k$ non-vanishing principal curvatures and the eigenfunction used in the proof of Theorem~\ref{thm. Complex-valued potentials II}.

\section{Real-valued potentials}\label{Section Real-valued potentials}

\begin{lemma}\label{lemma time reversal symmetry}
Let $T\in C_{\rm pol}^{\infty}(\R^d)$ be real-valued and time-reversal symmetric, i.e.\ $T(\xi)=T(-\xi)$. Let $\varphi(x)=\sin(\eta\cdot x)\psi(x)$ where $\psi\in C_b^{\infty}(\R^d)$ is any real-valued function and $\eta\in\R^d$. Then the function $T(D)\varphi$ is real-valued.
\end{lemma}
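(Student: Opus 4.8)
The plan is to compute $T(D)\varphi$ directly using the Fourier-multiplier representation \eqref{def. T(D)} and exploit the reality and evenness of $T$ to show the result is its own complex conjugate. First I would write $\sin(\eta\cdot x)=\frac{1}{2\I}(\e^{\I\eta\cdot x}-\e^{-\I\eta\cdot x})$, so that $\varphi=\frac{1}{2\I}(\e^{\I\eta\cdot x}\psi(x)-\e^{-\I\eta\cdot x}\psi(x))$, and then use the commutation relation \eqref{commutation T(D) with plane wave}, namely $T(D)\e^{\pm\I\eta\cdot x}=\e^{\pm\I\eta\cdot x}T(D\pm\eta)$ as operators on smooth bounded functions. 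This gives
\begin{align*}
T(D)\varphi(x)=\frac{1}{2\I}\left(\e^{\I\eta\cdot x}(T(D+\eta)\psi)(x)-\e^{-\I\eta\cdot x}(T(D-\eta)\psi)(x)\right).
\end{align*}

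Next I would show that the two terms are complex conjugates of one another. Since $\psi$ is real-valued, $\overline{\e^{\I\eta\cdot x}(T(D+\eta)\psi)(x)}=\e^{-\I\eta\cdot x}\,\overline{(T(D+\eta)\psi)(x)}$, so it suffices to check that $\overline{(T(D+\eta)\psi)(x)}=(T(D-\eta)\psi)(x)$ when $\psi$ is real. Writing out $(T(D+\eta)\psi)(x)=(2\pi)^{-d}\int\e^{\I x\cdot\xi}T(\xi+\eta)\widehat{\psi}(\xi)\,\rd\xi$ and taking the complex conjugate, I use that $T$ is real-valued so $\overline{T(\xi+\eta)}=T(\xi+\eta)$, and that $\widehat{\psi}$ of a real function satisfies $\overline{\widehat{\psi}(\xi)}=\widehat{\psi}(-\xi)$; after the substitution $\xi\mapsto-\xi$ and using the evenness $T(-\xi+\eta)=T(\xi-\eta)$, one arrives at exactly $(2\pi)^{-d}\int\e^{\I x\cdot\xi}T(\xi-\eta)\widehat{\psi}(\xi)\,\rd\xi=(T(D-\eta)\psi)(x)$. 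Hence $T(D)\varphi=\frac{1}{2\I}(z-\overline{z})=\im z$ for $z=\e^{\I\eta\cdot x}(T(D+\eta)\psi)(x)$, which is real-valued.

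The main technical obstacle, and the only point requiring care, is the justification of these Fourier-side manipulations: $\psi\in C_b^\infty$ need not be integrable, so $\widehat{\psi}$ is only a tempered distribution and the integral representations above are formal. This is precisely the issue already handled in the paper's setup — the regularized definition \eqref{def. semiclassical psdo} with cutoffs $\varphi(\epsilon\xi)\varphi(\epsilon y)$ and the integration-by-parts argument of Proposition \ref{proposition pseudodifferential mapping properties}. So the rigorous version runs through the regularized integrals $g_\epsilon$: in each $g_\epsilon$ all manipulations (splitting the sine, shifting the multiplier, changing variables $\xi\mapsto-\xi$, $y\mapsto-y$, conjugating, using reality and evenness of $T$ and realness of $\psi$) are legitimate since everything is absolutely convergent, one verifies $\overline{g_\epsilon(x)}=g_\epsilon(x)$ with a symmetrized choice of cutoff $\varphi$ (e.g. $\varphi$ even), and then lets $\epsilon\to0$ using the $\epsilon$-independent bounds from Proposition \ref{proposition pseudodifferential mapping properties}. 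Since $\psi$ is only assumed bounded with bounded derivatives (i.e. $\psi\in S^0_\gamma$ with $\rho\equiv$ const, or handled directly), Proposition \ref{proposition pseudodifferential mapping properties} applies and guarantees the limit exists and inherits the reality. Everything else is routine bookkeeping.
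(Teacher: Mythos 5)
Your proposal is correct and follows essentially the same route as the paper: split $\sin(\eta\cdot x)$ into exponentials, apply the commutation relation \eqref{commutation T(D) with plane wave}, and use the reality and evenness of $T$ together with $\overline{\widehat{\psi}(\xi)}=\widehat{\psi}(-\xi)$ after the substitution $\xi\mapsto-\xi$ to see that the two terms are complex conjugates. Your extra care about justifying the Fourier-side manipulations via the regularized integrals of \eqref{def. semiclassical psdo} is a welcome (and correct) elaboration of a step the paper leaves implicit.
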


\begin{proof}
A straightforward calculation using \eqref{commutation T(D) with plane wave} yields
\begin{align*}
T(D)\varphi(x)&=\frac{1}{2\I}\left(\e^{\I\eta\cdot x}T(D+\eta)-\e^{-\I\eta\cdot x}T(D-\eta)\right)\psi(x).
\end{align*}
Changing variables $\xi\to-\xi$ in the resulting integrals and using the fact that $\widehat{\psi}(\xi)=\widehat{\psi}(-\xi)$, one finds that $\overline{T(D)\varphi(x)}=T(D)\varphi(x)$.
\end{proof}

\begin{proposition}\label{proposition real potential}
Assume that the assumptions of Theorem~\ref{thm. Complex-valued potentials} or Theorem~\ref{thm. Complex-valued potentials II} hold and that $T$ is a polynomial such that $T(\xi)=T(-\xi)$ for all $\xi\in\R^d$. In addition, assume that, possibly after a linear bijection $L:\R^d\to\R^d$, we have $e_1\in M_{\lambda}$ and 
\begin{align}\label{technical condition real potentials}
\tau(T(e_1+\cdot)\cap\set{\alpha\in \N_0^d}{|\alpha|\in 2\N-1\mbox{ and }\alpha_1=m}=\{m e_1\}
\end{align}
for some odd integer $m$. Then the conclusion of Theorem \ref{thm. Complex-valued potentials} or Theorem \ref{thm. Complex-valued potentials II} holds with real-valued potentials. 
\end{proposition}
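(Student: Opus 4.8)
The plan is to modify the construction in the proof of Theorem~\ref{thm. Complex-valued potentials} (resp.\ Theorem~\ref{thm. Complex-valued potentials II}) by replacing the complex exponential $\e^{\I\eta\cdot x}$ with a real trigonometric function, and to use the symmetry $T(\xi)=T(-\xi)$ together with Lemma~\ref{lemma time reversal symmetry} to guarantee that the resulting potential is real. More precisely, after applying the linear bijection $L$ so that $\eta=e_1\in M_\lambda$, I would set $\varphi(x):=\sin(e_1\cdot x)\psi(x)$, where $\psi=\rho^{-N}$ is the same anisotropic weight used before (with $\rho,\gamma$ adapted to the curvature data as in \eqref{gamma, rho thm 1} or \eqref{gamma, rho thm 2}), and define $V:=-(T(D)-\lambda)\varphi/\varphi$. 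By Lemma~\ref{lemma time reversal symmetry}, $T(D)\varphi$ is real-valued, and since $\psi$ and hence $\varphi$ is real, $V$ is real wherever it is defined. The semiclassical rescaling $h=1/n$, $u_n(x)=h^N\varphi(hx)$, $V_n(x)=W_h(hx)$ proceeds exactly as before, reducing everything to the estimate $|W_h(x)|\lesssim h\rho(x)^{-2}$ for $W_h=-(T(hD)-\lambda)\varphi/\varphi$ with $\varphi$ now the unscaled sine-times-weight function.

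The two genuinely new issues are (i) the zeros of $\sin(e_1\cdot x)$, which make $1/\varphi$ singular, and (ii) verifying that the numerator $(T(hD)-\lambda)\varphi$ still enjoys the required anisotropic decay \emph{and vanishes to matching order at the zero set of $\varphi$}, so that the quotient is smooth with the stated bound. For (ii) I would expand $T(D)\varphi=\frac{1}{2\I}(\e^{\I e_1\cdot x}T(D+e_1)-\e^{-\I e_1\cdot x}T(D-e_1))\psi$ as in the lemma, Taylor-expand $T(e_1+\xi)-\lambda$ at $\xi=0$, and apply Proposition~\ref{proposition symbols}. Condition \eqref{technical condition real potentials} is exactly what is needed here: it isolates, among the odd-total-degree monomials with $\alpha_1=m$, the single term $\xi_1^m$, whose associated contribution to $(T(D)-\lambda)\varphi$ is $\cos(e_1\cdot x)$ times (a constant times) $D_1^{\,m-1}$ applied to $\psi$-type factors. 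This cosine factor shares the zero set of no function; rather, the \emph{sine}-carrying terms (even total degree, or odd degree with $\alpha_1\neq m$) produce $\sin(e_1\cdot x)$ times lower-order-in-$h$ symbols and cancel against $\varphi$ in the quotient, while the $\cos$ terms must be shown to vanish at $e_1\cdot x\in\pi\Z$ — which they do not; the point is that the \emph{leading} real contribution must itself be proportional to $\sin(e_1\cdot x)$. So the correct reading is: \eqref{technical condition real potentials} forces all odd-degree monomials other than $\xi_1^m$ (which would pair with $\cos$) to be absent at the critical homogeneity, so that after collecting terms the numerator is $\sin(e_1\cdot x)$ times a symbol in $hS^{\ell-1}_\gamma$ plus a genuinely higher-order remainder that also carries an overall $\sin(e_1\cdot x)$; dividing by $\varphi=\sin(e_1\cdot x)\psi$ then yields a smooth $W_h\in hS^{-1}_\gamma$.

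I would organize the argument as: first record the operator identity for $T(D)\varphi$ from Lemma~\ref{lemma time reversal symmetry}; second, substitute the finite Taylor expansion of $T(e_1+\cdot)-\lambda$ and group monomials $\xi^\alpha$ according to the parity of $|\alpha|$, noting $\e^{\I e_1\cdot x}D^\alpha\psi\mp\e^{-\I e_1\cdot x}D^{\alpha}\psi$ equals $2\I\sin$ or $2\cos$ times $D^\alpha\psi$ depending on parity, so that the \emph{whole} expression is (automatically, by the lemma) $\sin(e_1\cdot x)$ times a real symbol; third, invoke Proposition~\ref{proposition symbols} with $a(\xi)=T(e_1+\xi)-\lambda$, $f=\psi\in S^{-N}_\gamma$, using that $\tau(a)$ lies in the half-space \eqref{assumption Taylor support of a} by the curvature computation already done in the earlier proofs, to conclude that the $\sin$-stripped symbol lies in $hS^{-N-1}_\gamma$; fourth, divide by $\psi=\rho^{-N}$ to get $W_h\in hS^{-1}_\gamma$, hence \eqref{estimate to prove in thm. Complex-valued potentials} (resp.\ \eqref{estimate to prove in thm. Complex-valued potentials II}), and finally rescale and read off the $L^q\to 0$ conclusion and the self-adjointness of $T(D)+V_n$. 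The main obstacle is step three combined with the smoothness of the quotient across $\{\sin(e_1\cdot x)=0\}$: one must be sure that condition \eqref{technical condition real potentials} genuinely removes every obstruction to writing the numerator as $\sin(e_1\cdot x)$ times a smooth symbol of the right order — equivalently, that no surviving monomial contributes a $\cos(e_1\cdot x)$-type term of critical homogeneity — and that the remainder terms from Taylor's theorem, handled via Proposition~\ref{proposition pseudodifferential mapping properties}, also factor through $\sin(e_1\cdot x)$; I expect this bookkeeping, rather than any hard estimate, to be the crux.
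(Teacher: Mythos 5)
There is a genuine gap, and it is precisely at the point you yourself flag as the crux. Your plan is to take $\varphi=\sin(x_1)\psi$ and divide $(T(D)-\lambda)\varphi$ by $\varphi$ directly, arguing that condition \eqref{technical condition real potentials} forces the numerator to factor as $\sin(x_1)$ times a smooth symbol. It does not. Expanding as in Lemma \ref{lemma time reversal symmetry}, the monomials $\xi^\alpha$ of odd total degree in the Taylor expansion of $T(e_1+\cdot)-\lambda$ pair with $\cos(x_1)$, and the hypothesis \eqref{technical condition real potentials} \emph{asserts the presence} of the odd-degree monomial $\xi_1^m$ in the Taylor support (it says the intersection equals $\{me_1\}$, not $\emptyset$); it only excludes \emph{other} odd monomials with $\alpha_1=m$, and says nothing about odd monomials with $\alpha_1\neq m$. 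Already for $T(\xi)=|\xi|^2$, $\lambda=1$, one has $T(e_1+\xi)-1=2\xi_1+|\xi|^2$, so $(T(D)-1)(\sin(x_1)\psi)$ contains $-2\cos(x_1)\partial_1\psi$, which does not vanish on $x_1\in\pi\Z$. Hence the naive quotient has poles on these hyperplanes, and your "correct reading" of the hypothesis — that all $\cos$-producing terms of critical homogeneity are absent — is a misreading.

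The missing idea is the Ionescu--Jerison regularization: one must perturb the eigenfunction, not just estimate the numerator. The paper sets $f=(T(D)-\lambda)(\sin(x_1)\psi)$, builds the correction
\begin{align*}
w(x)=\sum_{k\in\Z}\cos(k\pi)\,(x_1-k\pi)^m\chi(x_1-k\pi)\,f(k\pi,x'),
\end{align*}
and takes $u=\sin(x_1)(\psi-\kappa w)$ with $\kappa=[(-1)^{(m+1)/2}\partial_1^mT(e_1)]^{-1}$, chosen so that $(T(D)-\lambda)u$ vanishes on $\{x_1\in\pi\Z\}$ to the same order as $u$ does. Condition \eqref{technical condition real potentials} enters only at this stage: when evaluating $(T(D)-\lambda)u$ at $x_1=k\pi$, the terms $\partial^\alpha T(e_1)\partial^\alpha w$ with $|\alpha|$ odd and $\alpha_1=m$ but $\alpha\neq me_1$ would send transverse derivatives onto $f(k\pi,x')$ and destroy the exact cancellation; the hypothesis rules these out, while the terms with $\alpha_1<m$ or $\alpha_1>m$ vanish for elementary reasons (too few $x_1$-derivatives on the monomial, or a derivative landing on the locally constant cutoff). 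After this cancellation one gets $|(T(D)-\lambda)u|\lesssim \dist(x_1,\pi\Z)\,\rho^{-1}\psi$ and $|u|\gtrsim\dist(x_1,\pi\Z)\,\psi$ (for $c$ small), and only then is $V=-(T(D)-\lambda)u/u$ smooth with the stated decay; reality then follows from Lemma \ref{lemma time reversal symmetry} as you say. Your steps one, two and the use of Proposition \ref{proposition symbols} for the decay of $f$ are fine, but without the correction $w$ the construction fails on the zero set of $\sin(x_1)$.
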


\begin{proof}
We first note that the assumptions on $T$ are invariant under linear bijections $L:\R^d\to\R^d$. Moreover, if $u$ is a real-valued function, then so is $u_L(x):=u(^tLx)$.  
Since
\begin{align*}
\widehat{u_L}(\xi)=|\det L|^{-1}\widehat{u}(L^{-1}\xi),
\end{align*}
we have that $T(D)u_L(x)=T_L(D)u(^tLx)$, where $T_L(\xi)=T(L\xi)$. If $\eta\in M_{\lambda}\setminus\{0\}$ we can arrange that $Le_1=\eta$. Since $T(\xi)=T(-\xi)$, the origin is a critical point, so there is no loss of generality in omitting it.

Assume now that the assumptions of the proposition hold with $L$ being the identity. We will prove the claim under the assumptions of Theorem \ref{thm. Complex-valued potentials}. The case of Theorem \ref{thm. Complex-valued potentials II} is analogous. We also assume that $n=1$ since the general case follows by scaling as in the proof of Theorem \ref{thm. Complex-valued potentials}.

We would like to choose the eigenfunction as in Lemma \ref{lemma time reversal symmetry}, i.e.
\begin{align*}
\widetilde{u}(x):=\sin(x_1)\psi(x)
\end{align*}
where $\psi$ is given by
\begin{align*}
\psi(x):=(1+|P_{\nu}x|^2+|P_{\nu}^{\perp}x|^4)^{-N/2}
\end{align*}
with $N>d/2$. Here $P_{\nu}$ is the orthogonal projection onto the subspace spanned by the unit vector $\nu=\nabla T(e_1)/|\nabla T(e_1)|$ and $P_{\nu}^{\perp}=\rm{id}-P_{\nu}$.
The problem is thatv$u$ now vanishes on the hypersurfaces $x_1=k\pi$, $k\in\Z$, so we cannot solve $(T(D)-\lambda+V)\widetilde{u}=\widetilde{u}$ for $V$ by dividing with $\widetilde{u}$. We follow the basic strategy of Ionescu and Jerison \cite{MR2024415} to remove the singularities. We are looking for a function $u$ with the same zero set (including multiplicities) as $\widetilde{u}$ such that $(T(D)-\lambda) u=0$ on this set.
We first compute
\begin{align}\label{T(D) applied to sin}
T(D)\widetilde{u}(x)&=\frac{1}{2\I}\e^{\I x_1}(T(D+e_1)-T(D-e_1))\psi(x)+\sin(x_1)T(D-e_1)\psi(x).
\end{align}
Set $f=(T(D)-\lambda)\widetilde{u}$. As in the proof of Theorem \ref{thm. Complex-valued potentials} we use Proposition \ref{proposition symbols} to infer that
\begin{align}\label{eq. derivative bounds f}
|\partial^{\alpha} f(x)|\leq C (1+|P_{\nu}x|+|P_{\nu}^{\perp}x|^2)^{-1}\psi(x),\quad |\alpha|\leq m.
\end{align}
In fact, $a(\xi)=T(\xi+e_1)-T(\xi-e_1)$ as well as $a(\xi)=T(\xi-e_1)-\lambda$ satisfy the assumptions of Proposition \ref{proposition symbols} (after relabelling coordinates). For example, in the first instance, $a(0)=T(e_1)-T(-e_1)=0$ and $\nabla a(0)=\nabla T(e_1)-\nabla T(-e_1)=2\nabla T(e_1)$. Here we used the time-reversal symmetry $T(\xi)=T(-\xi)$ and its consequence $\nabla T(\xi)=-\nabla T(-\xi)$.
Define
\begin{align}\label{def. w}
w(x):=\sum_{k\in\Z}\cos(k\pi)(x_1-k\pi)^m\chi(x_1-k\pi)f(k\pi,x')
\end{align} 
where $x'=(x_2,\ldots,x_d)\in \R^{d-1}$ and $\chi\in C_c^{\infty}(\R,[0,1])$ is supported on $[-c,c]$ and equals $1$ on $[-c/2,c/2]$ for some small $c>0$ to be determined later. We then set
\begin{align}\label{def. regularized u}
u(x):=\sin(x_1)(\psi(x)-\kappa w(x)),\quad \kappa:=\left[(-1)^{\frac{m+1}{2}}\partial_1^mT(e_1)\right]^{-1}.
\end{align}
By \eqref{eq. derivative bounds f} we have
\begin{align}\label{lower bound for u}
|u(x)|\geq |\sin(x_1)|\psi(x)(1-C\kappa c^m)\gtrsim|\sin(x_1)|\psi(x)\gtrsim\dist(x_1,\pi\Z)\psi(x)
\end{align}
if $c$ is chosen sufficintly small. We now fix $c<\pi/10$ such that \eqref{lower bound for u} holds. This also guarantees that the functions $\chi(\cdot-k\pi)$, $k\in\Z$ have mutually disjoint supports. A straightforward computation yields 
\begin{align*}
(T(D)-\lambda) u(x)&=f(x)
-\frac{\kappa}{2\I}\sum_{|\alpha|\geq 1}\frac{1}{\alpha!}(e^{\I x_1}-(-1)^{|\alpha|}e^{-\I x_1})\partial^{\alpha}T(e_1)D^{\alpha} w(x)\\
&=f(x)-\kappa\sin(x_1)\sum_{\alpha\in 2\N}\frac{1}{\alpha!}\partial^{\alpha}T(e_1)(-1)^{\frac{|\alpha|}{2}}\partial^{\alpha} w(x)\\
&-\kappa\cos(x_1)\sum_{\alpha\in 2\N-1}\frac{1}{\alpha!}\partial^{\alpha}T(e_1)(-1)^{\frac{|\alpha|+1}{2}}\partial^{\alpha} w(x).
\end{align*}
This together with \eqref{eq. derivative bounds f}, \eqref{def. w} implies that
\begin{align}\label{bound on (T-lambda) u}
|(T(D)-\lambda) u(x)|\lesssim (1+|P_{\nu}x|+|P_{\nu}^{\perp}x|^2)^{-1}\psi(x).
\end{align}
Using \eqref{technical condition real potentials} we check that, for $k\in\Z$,
\begin{align*}
&(T(D)-\lambda) u(k\pi,x')=f(k\pi,x')\\
&-\kappa
\sum_{|\alpha|\in 2\N-1}\frac{1}{\alpha!}\partial^{\alpha}T(e_1)(-1)^{\frac{|\alpha|+1}{2}}\partial^{\alpha}((x_1-k\pi)^m\chi(x_1-k\pi)f(k\pi,x'))|_{x_1=k\pi}\\
&=f(k\pi,x')(1-\kappa(-1)^{\frac{m+1}{2}}\partial_1^mT(e_1))=0.
\end{align*}
To see this, split the sum into a part where $\alpha=me_1$ and its complement. The first part gives exactly the expression in the last line. The complementary part can be further split into three subparts where $\alpha_1<m$, $\alpha_1>m$ or $\alpha_1=m$. The first subpart is zero because there are too few $x_1$-derivatives that can fall onto the monomial $(x_1-k\pi)^m$. The second subpart is zero because there is at least one $x_1$-derivative that must fall onto the cutoff function $\chi$, and this is constant near $x_1=k\pi$. The third subpart is zero by assumption \eqref{technical condition real potentials}. 

Combining the result of the previous computation with \eqref{bound on (T-lambda) u} we get
\begin{align*}
|(T(D)-\lambda) u(x)|\lesssim (1+|P_{\nu}x|+|P_{\nu}^{\perp}x|^2)^{-1}\dist(x_1,\pi\Z)\psi(x).
\end{align*}
This together with \eqref{lower bound for u} yields that
\begin{align*}
|V(x)|=\frac{|(T(D)-\lambda) u(x)|}{|u(x)|}\lesssim (1+|P_{\nu}x|+|P_{\nu}^{\perp}x|^2)^{-1}.
\end{align*}
That $V$ is real-valued follows from Lemma \ref{lemma time reversal symmetry}.  
\end{proof}

\begin{corollary}\label{corollary real valued V}
Assume that the assumptions of Theorem~\ref{thm. Complex-valued potentials} or Theorem~\ref{thm. Complex-valued potentials II} hold and that $T$ is a radial polynomial. Then the conclusion of Theorem \ref{thm. Complex-valued potentials} or Theorem \ref{thm. Complex-valued potentials II} holds with real-valued potentials. 
\end{corollary}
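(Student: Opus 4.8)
The plan is to deduce Corollary~\ref{corollary real valued V} from Proposition~\ref{proposition real potential} by checking that a radial polynomial $T$ automatically satisfies both the time-reversal hypothesis $T(\xi)=T(-\xi)$ and the technical condition \eqref{technical condition real potentials} for a suitable linear bijection $L$ and a suitable odd integer $m$. The first point is immediate: if $T(\xi)=P(|\xi|^2)$ for some polynomial $P$ in one variable, then replacing $\xi$ by $-\xi$ leaves $|\xi|^2$ unchanged, so $T$ is automatically even. Thus only \eqref{technical condition real potentials} has to be verified, and since we already get to apply a linear bijection $L$ before invoking the proposition, it suffices to exhibit \emph{one} convenient choice.

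First I would normalize. Since $\lambda$ is a regular value and (under the assumptions of Theorem~\ref{thm. Complex-valued potentials} or \ref{thm. Complex-valued potentials II}) $M_\lambda$ is a nonempty smooth hypersurface, pick $\eta\in M_\lambda$. Because $T$ is radial, $M_\lambda$ is a union of spheres $\{|\xi|=r_i\}$, and by a rotation we may take $\eta=r e_1$ for some $r>0$; rescaling by the linear map $\xi\mapsto r^{-1}\xi$ (which preserves radiality and the evenness of $T$, replacing $T$ by $\xi\mapsto T(r\xi)$) we reduce to $\eta=e_1$. Then $e_1\in M_\lambda$ as required, and $\nabla T(e_1)$ points in the $e_1$ direction by radial symmetry, so after this reduction $\nu=\nabla T(e_1)/|\nabla T(e_1)|=e_1$, consistent with the coordinate normalization used throughout Section~\ref{Section Real-valued potentials}.

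The heart of the matter is then the Taylor expansion of $a(\xi)\coloneqq T(e_1+\xi)$ about $\xi=0$. Writing $T(\zeta)=P(|\zeta|^2)$ and $|e_1+\xi|^2=1+2\xi_1+|\xi|^2$, we get $a(\xi)=P(1+2\xi_1+|\xi|^2)$, which is a polynomial in the two quantities $\xi_1$ and $|\xi|^2=\xi_1^2+\dots+\xi_d^2$. Expanding, every monomial $\xi^\alpha$ appearing in $a$ with $\alpha'\coloneqq(\alpha_2,\dots,\alpha_d)\neq 0$ must carry the variables $\xi_2,\dots,\xi_d$ only through powers of $|\xi|^2$, hence in each such monomial the total degree in $(\xi_2,\dots,\xi_d)$ is even and moreover $\xi_1$ occurs with the \emph{same} parity as $|\alpha|$ only when $\alpha'=0$. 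Concretely: any $\alpha\in\tau(a)$ with $|\alpha|$ odd and $\alpha_1=m$ and $\alpha'\neq 0$ would force $|\alpha'|=|\alpha|-m$ to be even and positive, but then the monomial $\xi^\alpha$ comes from expanding $\xi_1^m$ against some $(\xi_1^2+\dots+\xi_d^2)^k$ with $2k\ge|\alpha'|$, and choosing $m$ to be the exponent of the first nonvanishing $\xi_1$-power (i.e. the lowest odd power of $\xi_1$ with a nonzero coefficient, which exists since $P'\neq 0$ at the regular value — here one uses that $\partial_{\xi_1}a(0)=2P'(1)=|\nabla T(e_1)|\neq 0$, so already $m=1$ works) one sees that the only odd-$|\alpha|$, $\alpha_1=m$ term with nonzero coefficient is $\alpha=me_1=e_1$ itself. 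This gives \eqref{technical condition real potentials} with $m=1$.

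I therefore expect the main obstacle to be purely bookkeeping: carefully matching the parity conventions in \eqref{technical condition real potentials} (which asks for $|\alpha|$ odd, $\alpha_1=m$, and isolates the single multi-index $me_1$) against the explicit even-polynomial structure $a(\xi)=P(1+2\xi_1+|\xi|^2)$, and confirming that the reduction to $\eta=e_1$ via a linear bijection is legitimate — in particular that the proposition's hypothesis "$T(\xi)=T(-\xi)$ for all $\xi$" and condition \eqref{technical condition real potentials} are the only things to check, all curvature/regular-value hypotheses being inherited verbatim from Theorem~\ref{thm. Complex-valued potentials}/\ref{thm. Complex-valued potentials II}. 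Once \eqref{technical condition real potentials} is in hand with $m=1$, Proposition~\ref{proposition real potential} applies directly and yields the real-valued potential with the stated decay, completing the proof.
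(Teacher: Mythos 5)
Your overall strategy is the same as the paper's: reduce to Proposition~\ref{proposition real potential} by noting that a radial polynomial is automatically even and then verifying \eqref{technical condition real potentials} after rescaling $\eta$ to $e_1$. The normalization step is fine. But the verification of \eqref{technical condition real potentials} with $m=1$ is wrong, and this is the entire content of the corollary. Take $T(\xi)=|\xi|^4$, so $a(\xi)=T(e_1+\xi)=(1+2\xi_1+|\xi|^2)^2$. Expanding, the term $4\xi_1|\xi|^2$ contributes the monomials $4\xi_1\xi_j^2$ for $j\geq 2$ with nonzero coefficient, so $\alpha=e_1+2e_j$ lies in $\tau(a)$, has $|\alpha|=3$ odd and $\alpha_1=1=m$, yet $\alpha\neq me_1$. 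Hence the intersection in \eqref{technical condition real potentials} is $\{e_1,\,e_1+2e_2,\ldots,e_1+2e_d\}\neq\{e_1\}$ and the condition fails for $m=1$. Your parity argument only shows that every $\alpha$ in the Taylor support has $\alpha_j$ even for $j\geq2$ (so $|\alpha|$ and $\alpha_1$ have the same parity); it does not rule out monomials $\xi_1^m\xi_j^{2c}$ with $c>0$, which is exactly what \eqref{technical condition real potentials} forbids. These extra terms matter: in the proof of Proposition~\ref{proposition real potential} they would contribute $\partial_1\partial_j^2$ falling on $(x_1-k\pi)^m\chi(x_1-k\pi)f(k\pi,x')$, which does not vanish at $x_1=k\pi$, so the cancellation that makes $(T(D)-\lambda)u$ vanish on the zero set of $\sin(x_1)$ breaks down.

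The fix, which is what the paper does, is to take $m$ to be the \emph{highest} odd power of $\xi_1$, namely $m=2K-1$ where $2K=\deg T$. Writing $T(\xi)=\sum_{l\leq K}c_l|\xi|^{2l}$, a monomial of $(1+2\xi_1+|\xi|^2)^K$ with $\alpha_1=2K-1$ must use exactly one factor $2\xi_1$ and take $\xi_1^2$ from each of the remaining $K-1$ factors $|\xi|^2$ (a degree count forces this), so the only such monomial is $2K\xi_1^{2K-1}$; and the lower-order terms $|\xi+e_1|^{2l}$, $l<K$, have total degree at most $2K-2<2K-1$ and thus contain no monomial with $\alpha_1=2K-1$ at all. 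With $m=2K-1$ condition \eqref{technical condition real potentials} holds and Proposition~\ref{proposition real potential} applies. (Your choice $m=1$ happens to work only when $K=1$, i.e.\ for $T(\xi)=c_0+c_1|\xi|^2$.)
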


\begin{proof}
The assumptions imply that $T$ is of the form $T(\xi)=T_0(|\xi|)$ where 
\begin{align*}
T_0(r)=\sum_{j=0}^Kc_lr^{2j}
\end{align*}
for some $K\in\N$ and $(c_l)_{l=1}^K\subset\R$. If $\eta\in M_{\lambda}$, then $|\eta|e_1\in M_{\lambda}$. The linear bijection $L\xi=|\eta|\xi$ preserves spherical symmetry, and $T_L(e_1)=T(Le_1)=\lambda$. 
We assume, as we may, that $K\neq 0$. Then 
\begin{align*}
|\xi+e_1|^{2K}=(1+2\xi_1+\xi_1^2+\ldots+x_d^2)^K=\xi_1^{2K}+2K\xi_1^{2K-1}+Q_K(\xi)
\end{align*}
where $Q_K$ is a sum of monomials that do not contain the factor $\xi_1^{2K-1}$. Similarly, none of the monomials appearing in $|\xi+e_1|^{2l}$, $l\leq K-1$, contain this factor. Therefore, condition \eqref{technical condition real potentials} holds with $m=2K-1$.
\end{proof}

\section{Proofs of Theorems \ref{thm. Radial potentials}--\ref{thm. Dirac}}

We first state a generalization of Propositions \ref{proposition pseudodifferential mapping properties} and \ref{proposition symbols} to symbols $a(x,\xi)$ depending on~$x$. We restrict ourselves to the case
\begin{align}\label{rho gamma special case}
\rho(x)=(1+|x|^2)^{1/2},\quad\gamma=(1,\ldots,1);
\end{align}
more general cases will not be needed. The proof is a straightforward adaptation of that of Propositions \ref{proposition pseudodifferential mapping properties} and \ref{proposition symbols} and will be omitted.

\begin{lemma}\label{lemma psdos radial weights}
Assume that $a\in C^{\infty}(\R^d\times\R^d)$ and that
\begin{align*}
|\partial_x^{\alpha}\partial_{\xi}^{\beta}a(x,\xi)|\leq C_{\alpha,\beta}(1+|x|)^{k-|\alpha|}(1+|\xi|)^{m_{\alpha,\beta}}.
\end{align*}
Then $a(x,h D):S_{\gamma}^{\ell}(\R^d)\to S_{\gamma}^{\ell+k}(\R^d)$ as a continuous map, with seminorms bounded independent of $h\in (0,1]$. Moreover, if $a(x,0)=0$ for all $x\in\R^d$, then $a(x,h D):S_{\gamma}^{\ell}(\R^d)\to hS_{\gamma}^{\ell+k-1}(\R^d)$.
\end{lemma}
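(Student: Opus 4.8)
The plan is to adapt the proofs of Propositions \ref{proposition pseudodifferential mapping properties} and \ref{proposition symbols} almost verbatim, the only new feature being the $x$-dependence of the symbol $a(x,\xi)$. First I would set up the oscillatory integral: given $f\in S_{\gamma}^{\ell}(\R^d)$ with $\gamma=(1,\ldots,1)$ and $\rho(x)=\langle x\rangle$, define the regularized kernel
\begin{align*}
g_{\epsilon}(x):=(2\pi)^{-d}\int\int\e^{\I(x-y)\cdot\xi}\varphi(\epsilon\xi)\varphi(\epsilon y)a(x,h\xi)f(y)\,\rd y\,\rd\xi,
\end{align*}
where $\varphi\in\mathcal{S}(\R^d)$, $\varphi(0)=1$, and then argue as before that the limit $\epsilon\to 0$ exists and is independent of $\varphi$. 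The $x$-derivatives hitting $a(x,h\xi)$ only improve matters: each $\partial_{x_j}$ applied to $a(x,h\xi)$ gains a factor $\langle x\rangle^{-1}$ by the hypothesis $|\partial_x^{\alpha}\partial_{\xi}^{\beta}a(x,\xi)|\lesssim \langle x\rangle^{k-|\alpha|}\langle\xi\rangle^{m_{\alpha,\beta}}$, which is exactly the decay rate one needs to stay in $S_{\gamma}^{\ell+k}$ since $\gamma_j=1$.

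For the integration by parts I would use the same two operators as in the proof of Proposition \ref{proposition pseudodifferential mapping properties}, namely $L_1=(1+(x-y)\cdot D_{\xi})/(1+|x-y|^2)$ to produce decay in $\langle x-y\rangle$ and $L_2=(1-\xi\cdot D_y)/(1+|\xi|^2)$ to produce decay in $\langle\xi\rangle$. Applying $L_1^{l_1}$ lands $\xi$-derivatives on $a(x,h\xi)$ and on $\varphi(\epsilon\xi)$; each $\xi$-derivative of $a$ costs at most a polynomial factor $\langle\xi\rangle^{m}$ and an $h$-power $h^{\gamma_j}\le 1$, so it is harmless uniformly in $h$. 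Applying $L_2^{l_2}$ lands $y$-derivatives on $\varphi(\epsilon y)f(y)$, producing $\langle y\rangle^{\ell}$; then the temperate-weight inequality $\langle y\rangle\lesssim\langle x\rangle\langle x-y\rangle^{s}$ (here $s=1$) converts this to $\langle x\rangle^{\ell}\langle x-y\rangle^{|\ell|}$. Choosing $l_1$ large enough that $-l_1+|\ell|<-d$ and then $l_2$ large enough that $-l_2+m(l_1)<-d$, the double integral converges and yields $|g_{\epsilon}(x)|\lesssim\langle x\rangle^{\ell}\sup_{x}\langle x\rangle^{-k}|a(x,\cdot)|_{\text{sem}}\lesssim\langle x\rangle^{\ell+k}$, uniformly in $\epsilon$ and $h$. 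The general case $|\alpha|\ge 1$ follows by differentiating under the integral sign, using that each $x$-derivative either hits $\e^{\I(x-y)\cdot\xi}$ (bringing down a $\xi$, absorbed as in $L_1$) or hits $a(x,h\xi)$ (gaining $\langle x\rangle^{-1}=\langle x\rangle^{-\gamma_j}$, exactly as required by \eqref{def. Sw}).

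For the second, quantitative statement, I would mimic the proof of Proposition \ref{proposition symbols}: assuming $a(x,0)=0$, Taylor-expand in the $\xi$-variable about $\xi=0$ with the coefficients now depending on $x$,
\begin{align*}
a(x,\xi)=\sum_{j=1}^{d}\xi_j b_j(x,\xi),\qquad b_j(x,\xi)=\int_0^1(\partial_{\xi_j}a)(x,\theta\xi)\,\rd\theta,
\end{align*}
where each $b_j$ still satisfies the symbol bounds of the lemma (differentiation under the integral and in $\theta$ only redistributes the same estimates). Then $a(x,hD)f=\sum_j (hD_j)(b_j(x,hD)f)+(\text{commutator terms from }x\text{-dependence})$, and since $\gamma_j=1$ every factor $hD_j$ contributes one power of $h$ together with one power of $\rho(x)^{-1}=\langle x\rangle^{-1}$ against $f$; the commutator $[\,hD_j,b_j(x,hD)\,]$ equals $h(\partial_{x_j}b_j)(x,hD)$ up to lower order, which carries an explicit $h$ and lands in $S_{\gamma}^{\ell+k-1}$ by the first part. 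Hence $a(x,hD)f\in hS_{\gamma}^{\ell+k-1}(\R^d)$. The main obstacle — really the only subtlety beyond bookkeeping — is tracking the commutators between $hD_j$ and the $x$-dependent factors to make sure the extra $h$ is genuinely there; but because $\gamma=(1,\ldots,1)$ the semiclassical calculus is the standard one, and each commutator is manifestly $O(h)$, so no surprises arise.
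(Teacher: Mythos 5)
Your argument is correct and is precisely the ``straightforward adaptation'' of Propositions \ref{proposition pseudodifferential mapping properties} and \ref{proposition symbols} that the paper invokes (the paper omits the proof for exactly this reason). One cosmetic point: in the left quantization the exact identity is $a(x,hD)f=\sum_j b_j(x,hD)(hD_jf)$, which avoids commutators altogether; your commutator $[hD_j,b_j(x,hD)]=h(D_{x_j}b_j)(x,hD)$ is in any case exact and lands in $hS_{\gamma}^{\ell+k-1}$ by the first part, so either route closes.
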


\begin{proof}[Proof of Theorem \ref{thm. Radial potentials}]
We will prove the claim for $n=1$; the general case follows from a scaling argument similar to that in the proofs of Theorems \ref{thm. Complex-valued potentials}--\ref{thm. Complex-valued potentials II}.

Let $\sigma_{\lambda}$ be the uniform surface measure of $M_{\lambda}$. Since $T$ is radial, the Fermi surface $M_{\lambda}$ is a sphere and hence $\sigma^{\vee}_{\lambda}$ is radial. We fix $r_0\gg 1$ and pick a smooth radial function $\varphi$ that is positive for $|x|\leq r_0$ and equal to $\sigma^{\vee}_{\lambda}$ for $|x|>r_0$. We then define the radial function
\begin{align*}
\widetilde{u}(x):=\varphi(x)\psi(x),
\end{align*}
where $\psi(x):=(1+|x|^2)^{-N/2}$, with $N>d/2$.
A straightforward calculation using Fubini and the convolution theorem yields that, for $|x|>r_0$,
\begin{align*}
f(x):=(T(D)-\lambda)\widetilde{u}(x)=a(x,D)\psi(x)
\end{align*}
where 
\begin{align*}
a(x,\xi)=(2\pi)^{-d}\int_{M_{\lambda}}\e^{\I \eta\cdot x}(T(\xi+\eta)-\lambda)\rd\sigma_{\lambda}(\eta).
\end{align*}
Since $M_{\lambda}$ is compact, it follows from stationary phase (see e.g.\ \cite[(1.2.8)]{MR1205579}) that
\begin{align}\label{stationary phase estimate for b}
|\partial_x^{\alpha}\partial_{\xi}^{\beta}a(x,\xi)|\leq C_{\alpha,\beta}(1+|x|)^{-\frac{d-1}{2}-|\alpha|}(1+|\xi|)^{m_{\alpha,\beta}}.
\end{align}
Since $a(x,0)=0$, Lemma \ref{lemma psdos radial weights} yields
\begin{align}\label{radial estimate}
|f(x)|\lesssim(1+|x|)^{-\frac{d-1}{2}-1}\psi(x).
\end{align}
After rescaling, we may assume that $M_{\lambda}$ is the unit sphere $S^{d-1}$.
From \eqref{asymptotic surface measure} it follows that
that the zeros of $\widetilde{u}$ are simple and the distance between consecutive zeros is uniformly bounded below by a constant $\delta>0$.
Let us denote these zeros by $Z:=\{r_k\}_{k=1}^{\infty}\subset(r_0,\infty)$.
Then we have that
\begin{align}\label{zeros of Bessel are simple}
|\widetilde{u}(x)|\gtrsim \dist(|x|,Z)(1+|x|)^{-\frac{d-1}{2}}\psi(x).
\end{align}
Pick $\chi\in C_c^{\infty}(\R,[0,1])$ as in the proof of Proposition \ref{proposition real potential}, i.e.\ $\chi$ is supported on $[-c,c]$ and equals $1$ on $[-c/2,c/2]$ for some sufficiently small $c>0$; in particular, we require $c<\delta/4$, so that the functions $\chi(\cdot-r_k)$ have mutually disjoint support. By assumption, $T(D)$ is a differential operator of the form
\begin{align*}
T(D)=\sum_{j=0}^mc_j(-\Delta)^j
\end{align*}
where $\{c_j\}_{j=0}^m\subset \R$, $c_m\neq 0$. We may clearly assume that $m\geq 1$. 
Define the radial functions
\begin{align*}
w(x):=\sum_{k=1}^{\infty}(|x|-r_k)^{2m}\chi(|x|-r_k)f(r_k),
\end{align*}  
and 
\begin{align*}
u(x):=\widetilde{u}(x)-\kappa w(x),\quad \kappa:=c_m^{-1}(2d)^{-m}.
\end{align*}
It follows from \eqref{radial estimate} and \eqref{zeros of Bessel are simple} that 
\begin{align}\label{radial 1}
|u(x)|\gtrsim \dist(|x|,Z)\psi(x)(1+|x|)^{-\frac{d-1}{2}}(1-\mathcal{O}(c^{2m-1})).\end{align}
Moreover, for $|x|=r_k$ we have
\begin{align*}
(T(D)-\lambda)u(x)=f_0(r_k)-\kappa c_m(2d)^mf_0(r_k)=0,
\end{align*}
where we used the notation $f(x)=f_0(|x|)$. This together with \eqref{radial estimate} yields 
\begin{align}\label{radial 2}
|(T(D)-\lambda)u(x)|\lesssim \dist(|x|,Z)\psi(x)(1+|x|)^{-\frac{d-1}{2}-1}.
\end{align}
Hence, for $c$ suffiently small, we have by \eqref{radial 1}--\eqref{radial 2}
\begin{align*}
|V(x)|=\frac{|(T(D)-\lambda)u(x)|}{|u(x)|}\lesssim (1+|x|)^{-1}.
\end{align*}
It is clear that $V$ is radial.
\end{proof}

\begin{proof}[Proof of Theorem \ref{thm Chandrasekhar-Herbst}]
Here we combine our previous results with ideas from \cite{MR3661408} and \cite{2015arXiv150401144F}. 
Observe that if $\kappa:=\sqrt{(\lambda+1)^2-1}$, then $\kappa e_d\in M_{\lambda}$.
As before, we may assume (after a rescaling) that $\kappa=1$ (and hence $\lambda=\sqrt{2}-1$).
First define
\begin{align*}
g(t)&:=\int_0^{t}\sin^2(y)\rd y,\\
w_n(x)&:=(n^2+|x'|^4+g(x_d)^2)^{-N/2},
\end{align*}
where $n\in \N$ and $N>(d+1)/4$. We then set
\begin{align*}
\phi_n(x)&:=(\sqrt{(D+e_d)^2+1}+\sqrt{(D-e_d)^2+1})w_n(x),\\
u_n(x)&:=\sin(x_d)\phi_n(x).
\end{align*}
The eigenvalue equation will hold if we set $V_n=-(T(D)-\lambda)u_n/u_n$, as usual. We have to prove that this is a smooth function decaying as in \eqref{decay V nonradial}. An elementary computation (compare \eqref{T(D) applied to sin}) yileds
\begin{align}\label{Chandrasekhar V}
V_n(x)=\e^{\I x_1}\frac{\partial_{x_d}w_n(x)}{\sin(x_d)\phi_n(x)}-\frac{(T(D-e_d)-\lambda)\phi_n(x)}{\phi_n(x)}.
\end{align}
This is fortunate since the term with the sine in the denominator now has a differential (i.e.\ local) operator in the nominator (similar to the situation in Proposition~\ref{proposition real potential}). The zeros of the denominator will be cancelled by the choice of~$w_n$. In fact, using that $1+g(x_d)\gtrsim |x_d|$, we have
\begin{align}\label{Chandrasekhar 1}
\left|\frac{\partial_{x_d}w_n(x)}{\sin(x_d)}\right|\lesssim (n+|x'|^2+|x_d|)^{-N/2-1}
\end{align}
for large enough $n$. Next, we would like to apply Proposition \ref{proposition symbols} to the denominator in the second term of $V_n$ with
\begin{align*}
a(\xi):=(T(\xi-e_d)-\lambda)(\sqrt{(\xi+e_d)^2+1}+\sqrt{(\xi-e_d)^2+1})
\end{align*}
and $f=w_n$. Notice that $a(0)=0$ by the assumption that $e_1\in M_{\lambda}$. Here we cannot use the full strength of Proposition \ref{proposition symbols} since $w_n$ is not a symbol. It only satisfies the weaker estimates
\begin{align*}
|w_n(x)|&\lesssim (n+|x'|^2+|x_d|)^{-N/2},\\
|\partial^{\alpha}w_n(x)|&\lesssim (n+|x'|^2+|x_d|)^{-N/2-1},\quad |\alpha|\geq 1.
\end{align*}
However, as is obvious from the proof of Proposition \ref{proposition symbols}, this still implies that
\begin{align}\label{Chandrasekhar 2}
|a(D)w_n(x)|\lesssim (n+|x'|^2+|x_d|)^{-N/2-1}.
\end{align}
It remains to prove that 
\begin{align}\label{Chandrasekhar 3}
|\phi_n(x)|\gtrsim (n+|x'|^2+|x_d|)^{-N/2}
\end{align}
since \eqref{Chandrasekhar V}--\eqref{Chandrasekhar 3} then imply the desired properties of $V_n$. We write
\begin{align*}
\phi_n(x)&=2\sqrt{2}w_n(x)+a(D)w_n(x),\\
a(\xi)&:=\sqrt{(\xi+e_d)^2+1}+\sqrt{(\xi-e_d)^2+1}-2\sqrt{2}.
\end{align*}
Since $a(0)=0$, the same modification of Proposition \ref{proposition symbols} as above yields
\begin{align*}
|\phi_n(x)|&\geq 2\sqrt{2}w_n(x)-\mathcal{O}(1)(n+|x'|^2+|x_d|)^{-N/2-1}\\
&\gtrsim (n+|x'|^2+|x_d|)^{-N/2}(1-\mathcal{O}(1/n))
\end{align*}
Hence \eqref{Chandrasekhar 3} will be satisfied for $n$ sufficiently large.
\end{proof}

\begin{proof}[Proof of Theorem \ref{thm. Dirac}]
Similarly as in the proof of Theorem \ref{thm Chandrasekhar-Herbst} we may assume witout loss of generality that $\lambda=\sqrt{2}$. Let $K$ be the dimension of the spinor space, i.e.\ $\alpha_j$ and $\beta$ are $K\times K$ matrices, and we are seeking an eigenfunction $u\in L^2(\R^d;\C^K)$. From the properties of the Dirac matrices it follows that the matrix $\alpha_d+\beta$ has eigenvalues $\sqrt{2}$ and $-\sqrt{2}$. We fix an eigenvector $\mathbf{v}\in \C^K$ corresponding to $\lambda=\sqrt{2}$ and define
\begin{align*}
u_n(x)&:=\e^{\I x_d}\psi_n(x)\mathbf{v},\\
\psi_n(x)&:=(n^2+|x'|^4+|x_d|^2)^{-N/2}
\end{align*}
with $n\in\N$ and $N>(d+1)/4$. We then compute
\begin{align*}
\mathcal{D}u(x)&=\e^{\I x_d}\psi_n(x)(\alpha_d+\beta)\mathbf{v}+\e^{\I x_d}\sum_{j=1}^dD_j\psi_n(x)\alpha_j\mathbf{v}\\
&=\lambda u_n(x)+\e^{\I x_d}\sum_{j=1}^dD_j\psi_n(x)\alpha_j\mathbf{v}.
\end{align*}
If we set
\begin{align}\label{V Dirac}
V_n(x):=-\frac{1}{\psi_n(x)}\sum_{j=1}^dD_j\psi_n(x)\alpha_j,
\end{align}
then we obtain
\begin{align*}
(\mathcal{D}+V_n(x)-\lambda)u_n(x)=\e^{\I x_d}\left(\sum_{j=1}^dD_j\psi_n(x)\alpha_j+V_n(x)\psi_n(x)\right)\mathbf{v}=0.
\end{align*}
Since $\alpha_j$ are hermitian matrices, $\psi_n$ is real-valued and $D_j=-\I\partial_j$, it follows that $V_n$ is anti-hermitian.
\end{proof}

\bibliographystyle{plain}
\bibliography{C:/Users/Jean-Claude/Dropbox/papers/bibliography_masterfile}
\end{document}